\documentclass[acmlarge,nonacm]{aamas} 



\usepackage{balance} 
\usepackage{proof} 
\usepackage{tikz}
\usepackage{amsmath,amsthm}
\usepackage{graphicx}
\usepackage{pgfplots}
\usepackage{apxproof}
\usetikzlibrary{arrows.meta,positioning,shapes.multipart, calc,shadows,patterns,automata, positioning}
\pgfplotsset{compat=1.18}
\usepackage{booktabs}
\usepackage{tabularx}
\usepackage[utf8]{inputenc}
\usepackage{textgreek}
\usepackage{listings}
\usepackage{xcolor}



\setcopyright{ifaamas}
\acmConference[AAMAS '26]{Proc.\@ of the 25th International Conference
on Autonomous Agents and Multiagent Systems (AAMAS 2026)}{May 25 -- 29, 2026}
{Paphos, Cyprus}{C.~Amato, L.~Dennis, V.~Mascardi, J.~Thangarajah (eds.)}
\copyrightyear{2026}
\acmYear{2026}
\acmDOI{}
\acmPrice{}
\acmISBN{}



\title[AAMAS-2026 Formatting Instructions]{\textmu ACP: A Formal Calculus for Expressive, Resource-Constrained Agent Communication}


\author{Arnab Mallick}
\affiliation{
  \institution{Center for Development of Advanced Computing}
  \city{Hyderabad}
  \country{India}}
\email{arnabm@cdac.in}

\author{Indraveni Chebolu}
\affiliation{
  \institution{Center for Development of Advanced Computing}
  \city{Hyderabad}
  \country{India}}
\email{indravenik@cdac.in}


\begin{abstract}
Agent communication remains a foundational problem in multi-agent systems:
protocols such as FIPA-ACL guarantee semantic richness but are intractable
for constrained environments, while lightweight IoT protocols achieve
efficiency at the expense of expressiveness. This paper presents
$\mu$ACP, a formal calculus for expressive agent communication
under explicit resource bounds. We formalize the
Resource-Constrained Agent Communication (RCAC) model, prove that a
minimal four-verb basis \textit{\{PING, TELL, ASK, OBSERVE\}} is suffices to encode finite-state FIPA protocols, and establish tight
information-theoretic bounds on message complexity. We further show that
$\mu$ACP can implement standard consensus under partial synchrony and
crash faults, yielding a constructive coordination framework for
edge-native agents. Formal verification in TLA$^{+}$ (model checking) and
Coq (mechanized invariants) establishes safety and boundedness, and supports liveness under modeled assumptions. 
Large-scale system simulations confirm ACP achieves a median end-to-end message latency of 34 ms (95th percentile 104 ms) at scale, outperforming prior agent and IoT protocols under severe resource constraints.
The main contribution is a unified calculus that reconciles
semantic expressiveness with provable efficiency, providing a rigorous
foundation for the next generation of resource-constrained multi-agent
systems.

\end{abstract}

\keywords{Agent Communication Languages, Speech Act Theory, Resource-Constrained Multi-Agent Systems, Semantic Compression, Formal Protocol Calculus}


         
\newcommand{\BibTeX}{\rm B\kern-.05em{\sc i\kern-.025em b}\kern-.08em\TeX}

\newtheorem{theorem}{Theorem}
\newtheorem{corollary}{Corollary}
\newtheorem{assumption}{Assumption}
\newtheorem*{remark}{Remark}
\newcommand{\PING}{\textsc{ping}}
\newcommand{\TELL}{\textsc{tell}}
\newcommand{\ASK}{\textsc{ask}}
\newcommand{\OBSERVE}{\textsc{observe}}
\newcommand{\option}[2]{\texttt{#1}=#2}
\theoremstyle{plain}

\theoremstyle{definition}
\newtheorem{inference}{Inference}[section]
\newcommand{\muACP}{\ensuremath{\mu\mathrm{ACP}}}
\DeclareMathOperator{\obs}{obs}
\DeclareMathOperator{\Effect}{Effect}
\DeclareUnicodeCharacter{03BC}{\textmu}
\DeclareMathOperator{\Reachable}{Reachable}
\DeclareMathOperator{\Inform}{Inform}
\DeclareMathOperator{\Want}{Want}
\DeclareMathOperator{\Query}{Query}
\DeclareMathOperator{\Happens}{Happens}
\DeclareMathOperator{\Notify}{Notify}
\newcommand{\Bel}{\mathsf{Bel}}

\lstdefinelanguage{TLA}{
  morekeywords={MODULE,CONSTANTS,VARIABLES,Init,Next,TRUE,FALSE,UNCHANGED},
  morecomment=[l]{\*},   
  sensitive=true
}

\lstdefinelanguage{Coq}{
  morekeywords={Definition,Inductive,Record,Fixpoint,Theorem,Lemma,Proof,Qed,
                Require,Import,Module,Section,End,Variable,Notation,Local},
  morecomment=[n]{(*}{*)},   
  sensitive=true
}

\lstset{
  basicstyle=\ttfamily\footnotesize,
  keywordstyle=\color{blue}\bfseries,
  commentstyle=\color{gray},
  breaklines=true,
  columns=flexible,
  frame=single,
  xleftmargin=1em,
  aboveskip=0.5em,
  belowskip=0.5em
}


\begin{document}


\pagestyle{fancy}
\fancyhead{}


\maketitle


\section{Introduction}
\label{sec:introduction}

Multi-agent systems (MAS) provide a powerful paradigm for complex distributed intelligence \cite{wooldridge2002introduction, weiss1999multiagent}. Effective agent communication enables coordination, negotiation, and collaborative problem-solving \cite{labrou1996semantics, singh1998agent}. However, traditional agent communication languages (ACLs) like FIPA-ACL \cite{fipa1997spec} and KQML \cite{finin1994kqml} assume unconstrained environments, making them ill-suited for edge-native multi-agent systems.

The proliferation of IoT devices, edge computing, and resource-constrained embedded systems demands protocols that operate under severe limitations: memory ($<$100KB), bandwidth (kilobits/second), processing power (microcontroller-class CPUs), and energy \cite{satyanarayanan2017emergence, bonomi2012fog, pister2009smart, pottie2000wireless}. While lightweight protocols like MQTT \cite{banks2014mqtt} and CoAP \cite{shelby2014constrained} address resource constraints, they lack the semantic richness required for sophisticated agent interactions \cite{al-fuqaha2015survey}.

This disconnect between semantic requirements and practical constraints creates a significant barrier for deploying intelligent agents in resource-constrained environments \cite{chen2020edge}. Current approaches either sacrifice expressiveness for efficiency (IoT protocols) or incur unacceptable overhead (traditional ACLs) \cite{fortino2018agent}. Consequently, no formal theoretical framework bridges these domains, leaving a critical gap in multi-agent systems theory.

We present $\mu$ACP (Micro Agent Communication Protocol), a formal calculus for agent communication under severe resource constraints. Our work establishes the theoretical foundation for edge-native multi-agent systems by addressing: \textit{How can we design an agent communication protocol that maintains semantic richness while operating within strict resource constraints?}

Our contributions are fourfold:
\begin{enumerate}
\item \textbf{Resource-Constrained Agent Communication (RCAC) Model}: A formal model explicitly accounting for resource constraints, enabling reasoning about semantic expressiveness vs. resource consumption trade-offs.
\item \textbf{Minimal Verb Set Completeness}: Proof that four verbs \{PING, TELL, ASK, OBSERVE\} are complete for finite-state agent communication under resource constraints, establishing theoretical bounds on semantic primitives.
\item \textbf{Semantic Compression Theory}: A formal theory providing rigorous bounds on message complexity, demonstrating $\mu$ACP achieves optimal compression with amortized $O(1)$ per-message complexity.
\item \textbf{Emergent Coordination Framework}: Theoretical conditions under which resource-constrained agents achieve emergent coordination, providing formal guarantees on system behavior despite individual limitations.
\end{enumerate}

Our theoretical framework advances multi-agent systems foundations and enables edge-native deployments previously impossible due to protocol limitations \cite{gubbi2013internet}. The calculus provides theoretical underpinnings for deploying intelligent agents on microcontrollers, sensor networks, and resource-constrained platforms, opening new research directions in edge AI and distributed intelligence \cite{zhang2021edge}.



\section{Background}
\label{sec:background}

The foundations of agent communication combine speech act theory, resource-constrained computing, information theory, and formal verification \cite{austin1962how, searle1969speech, hoare1985communicating}.

\paragraph{Speech Act Theory and Agent Communication}

Austin \cite{austin1962how} and Searle \cite{searle1969speech} established speech act theory, modeling utterances as actions. In multi-agent systems, communication is expressed as \textit{performatives} \cite{cohen1990intention, singh2000social}.  
A speech act is $SA = (F, P, C)$ with force $F$, propositional content $P$, and context $C$ \cite{cohen1990intention}. ACLs such as FIPA-ACL and KQML adopt this model with rich performative sets \cite{labrou1996semantics}.  

Formally, with agents $\mathcal{A}$, messages $\mathcal{M}$, and contexts $\mathcal{C}$, a communication act is:
\begin{equation}
\text{Comm}(a_i, a_j, m, t) : \mathcal{A} \times \mathcal{A} \times \mathcal{M} \times \mathbb{T} \rightarrow \mathbb{B}
\end{equation}
where $\mathbb{T}$ is time and $\mathbb{B}=\{true, false\}$ denotes success \cite{wooldridge2002introduction}.

\paragraph{Resource-Constrained Computing Theory}

Embedded and sensor systems operate under CPU, bandwidth, memory, and energy limits \cite{pister2009smart, pottie2000wireless}.  
A system is modeled as $(R, B, M, E)$, with $R,B,M,E \in \mathbb{R}^+$ denoting resources. Protocol design requires minimizing weighted consumption while ensuring functionality:
\begin{equation}
\min_{p \in \mathcal{P}} \{ \alpha R_p + \beta B_p + \gamma M_p + \delta E_p \}
\end{equation}
subject to $F(p) \geq F_{min}$ \cite{mottola2011programming, chen2020edge}.

\paragraph{Information Theory and Semantic Compression}

Information theory formalizes communication limits \cite{shannon1948mathematical}. The entropy of random variable $X$ is:
\begin{equation}
H(X) = -\sum_{x \in \mathcal{X}} p(x) \log_2 p(x)
\end{equation}

For agent communication, semantic entropy measures minimal encoding of primitives $\mathcal{S}=\{s_i\}$ with probabilities $p(s_i)$ \cite{barhillel1964formal}:
\begin{equation}
H_S(\mathcal{S}) = -\sum_{i=1}^{n} p(s_i) \log_2 p(s_i)
\end{equation}

Semantic compression reduces message size while preserving meaning \cite{ziv1977universal, cover2006elements}, but remains underexplored for constrained agent protocols.

\paragraph{Formal Methods for Protocol Verification}

Formal methods rigorously specify and verify protocols \cite{clarke1986automatic, holzmann1991design}. Properties are expressed in temporal logic \cite{pnuelli1986applications}:  
- \textbf{Safety}: $\square \neg \phi$ (avoid bad states)  
- \textbf{Liveness}: $\square \diamond \psi$ (eventual progress)  
- \textbf{Fairness}: recurrent conditions yield recurrent responses  

Model checking tools such as SPIN and TLA+ verify these properties in finite-state systems \cite{holzmann1991design, lamport1994tla}, supporting correctness in constrained agent communication.


\section{Related Work}
\label{sec:related}

Our work builds upon research in multi-agent communication, lightweight protocols, and formal models for constrained systems.

\paragraph{Agent Communication Languages}

The two most influential ACLs are FIPA-ACL \cite{fipa1997spec} and KQML \cite{finin1994kqml}.  
FIPA-ACL defines $>$20 performatives with semantics grounded in BDI theory \cite{rao1995model}, but its hundreds of bytes per message make it unsuitable for constrained devices \cite{fortino2018agent}.  
KQML \cite{finin1994kqml} emphasizes knowledge exchange but still incurs high overhead \cite{labrou1996semantics}.  
Extensions like Singh’s social semantics \cite{singh1998agent} and newer JSON-based protocols (MCP, A2A, ANP) \cite{survey_agent_interop2025} improve interoperability but lack formal semantics and resource guarantees.

\begin{table}[t]
\centering
\caption{Comparison of Agent Communication Languages based on Resource Efficiency}
\label{tab:acl_comparison}
\resizebox{\columnwidth}{!}{%
\begin{tabular}{lcccc}
\toprule
\textbf{Protocol} & \textbf{Performatives} & \textbf{Min. Overhead} & \textbf{Parsing Effort} & \textbf{Target Environment} \\
\midrule
FIPA-ACL & $>$20 & 200–500 & High & Cloud/Desktop \\
KQML & $>$15 & 150–400 & Medium–High & Distributed \\
Social Semantics \cite{singh1998agent} & $>$10 & 100–300 & Medium & Distributed \\
JSON-based (MCP, A2A, ANP) & few & 100s & Moderate & Cloud/API \\
\textbf{$\mu$ACP (Our work)} & \textbf{4} & \textbf{8--50} & \textbf{Low} & \textbf{Constrained Edge} \\
\bottomrule
\end{tabular}
}
\end{table}

\paragraph{Lightweight Communication Protocols}

IoT protocols like MQTT \cite{banks2014mqtt} and CoAP \cite{shelby2014constrained} minimize header size but lack semantic primitives for reasoning \cite{al-fuqaha2015survey}.  
More expressive models include the eXchange Calculus (XC) \cite{audrito2024exchange}, SMrCaIT \cite{chen2023smrcait}, and HpC \cite{xu2025hpc}, which capture coordination, mobility, and timing, but do not address semantic minimality or bounded-resource completeness.

\paragraph{Resource-Constrained Agent Systems}

Lightweight BDI frameworks (e.g., Agent Factory \cite{collier2008agent}) rely on heavy communication layers. Surveys confirm constrained inter-agent communication remains open \cite{fortino2018agent,chen2020edge}.  
Formal calculi like SMrCaIT \cite{chen2023smrcait} and HpC \cite{xu2025hpc} capture timing and security, but not minimal expressive primitives or compression bounds. $\mu$ACP instead gives provable guarantees on minimality and coordination.

\paragraph{Formal Frameworks for Constrained Communication}

IoT process calculi (IoT-LySa, CIoT \cite{merro2016ciot}), SMrCaIT \cite{chen2023smrcait}, and HpC \cite{xu2025hpc} cover sensors, security, and hybrid dynamics.  
Probabilistic frameworks like dTP-Calculus \cite{song2024dtp} model safety probabilistically.  
Attribute-based calculi like AbU \cite{pasqua2023abu} formalize ECA workflows.  
Collective abstractions such as XC \cite{audrito2024exchange} program distributed ensembles.  
None establish a minimal verb set with completeness, tight compression bounds, or consensus guarantees.

\paragraph{Recent Advancement}  
Recent years have seen new protocols for LLM-powered agents, IoT-constrained devices, and large-scale coordination. Surveys highlight the shift from classic ACLs to modern standards such as MCP, ACP, A2A, and ANP, which set benchmarks for interoperability and expressiveness~\cite{survey_agent_interop2025, ferrag2025review}. Advances also include formal and scalable verification of parameterized neural-symbolic MAS~\cite{ijcai2024verification}. These developments underscore the timeliness of our minimal, resource-aware formalism and motivate future empirical validation.

\paragraph{Research Gaps and Contributions}

We identify four gaps:  
(1) \textbf{Semantic minimality}: no minimal, complete verb set under constraints.  
(2) \textbf{Resource-aware encoding}: no formal size/CPU/memory/energy bounds.  
(3) \textbf{Coordination with guarantees}: no reduction to consensus with safety/liveness proofs.  
(4) \textbf{Formal verification}: no machine-checked proofs of resource and coordination.  

$\mu$ACP addresses these by providing minimal primitives, resource-theoretic bounds, consensus reduction, and mechanized verification in Coq/TLA+.


\section{\texorpdfstring{The $\mu$ACP Calculus}{The muACP Calculus}}
\label{sec:calculus}

We now present $\mu$ACP, a formal calculus for resource-constrained agent communication.


\begin{table}[h]
\centering
\resizebox{\columnwidth}{!}{%
\begin{tabular}{ll}
\toprule
Symbol & Meaning \\
\midrule
$\mathcal{A}$ & Set of agents \\
$\mathcal{R}_c=(M,B,C,E)$ & Resource budget: memory, bandwidth, CPU, energy \\
$\mathcal{C}$ & Set of communication channels \\
$\mathcal{P}=(\mathcal{V},\mathcal{O},\mathcal{Q})$ & Protocol specification: verbs, options, QoS \\
$\mathcal{S}_i=(M_i,K_i,T_i,H_i)$ & Local state of agent $i$ \\
$\rho(a_i,v,o,p)$ & Resource consumption of action $(a_i,v,o,p)$ \\
$m=\langle h,v,o,p\rangle$ & Message structure \\
$\tau(f)$ & Translation from FIPA performative $f$ to $\mu$ACP \\
$T_{FIPA}(P)$ & Observable traces of a FIPA protocol $P$ \\
$T_{\mu ACP}(\tau(P))$ & Observable traces of the $\mu$ACP encoding of $P$ \\
\bottomrule
\end{tabular}
}
\caption{Notation used in the paper}
\label{tab:symbol}
\end{table}

\subsection{Formal Model Definition}
\label{sec:formal-model}

We begin by defining the abstract space in which $\mu$ACP operates.

\begin{definition}[Resource-Constrained Agent Communication (RCAC) Space]
An \emph{RCAC space} is a tuple
\[
\mathcal{R} = (\mathcal{A}, \mathcal{R}_c, \mathcal{C}, \mathcal{P})
\]

where:
\begin{itemize}
  \item $\mathcal{A} = \{a_1, a_2, \ldots, a_n\}$ is a finite set of agents;
  \item $\mathcal{R}_c = (M, B, C, E)$ is the resource budget, with
        memory $\leq M$, bandwidth $\leq B$, CPU cycles $\leq C$, and energy $\leq E$;
  \item $\mathcal{C} = \{c_1, c_2, \ldots, c_m\}$ is the set of available communication channels;
  \item $\mathcal{P} = (\mathcal{V}, \mathcal{O}, \mathcal{Q})$ is the protocol specification,
        consisting of verbs, options, and QoS levels.
\end{itemize}
\end{definition}

\begin{definition}[Agent State]
Each agent $a_i \in \mathcal{A}$ has a local state
\[
\mathcal{S}_i = (M_i, K_i, T_i, H_i)
\]
where $M_i$ is memory state, $K_i$ is a knowledge base (set of ground literals),
$T_i$ is a timer configuration, and $H_i$ is a bounded history buffer.
The global state is $\mathcal{S} = \prod_{i=1}^n \mathcal{S}_i$.
\end{definition}

\begin{definition}[Resource Consumption Function]
An action of the form $(a_i, v, o, p)$
induces consumption
\[
\rho(a_i, v, o, p) = (\rho_m, \rho_b, \rho_c, \rho_e)
\]
where each component denotes memory, bandwidth, CPU, and energy usage.
The action is feasible iff $\rho(a_i, v, o, p) \leq \mathcal{R}_c$.
\end{definition}

\begin{corollary}[Concrete Resource Bounds]
For any agent $a_i$ operating under message rate $\lambda_i \leq \lambda_{\max}$ 
over time horizon $T$, the cumulative resource consumption satisfies:

\[
\begin{aligned}
\text{Bandwidth}_i(T) &\leq B \cdot T, \\
\text{Memory}_i(T) &\leq M, \\
\text{CPU}_i(T) &\leq C_p \cdot T, \\
\text{Energy}_i(T) &\leq E \cdot T
\end{aligned}
\]

Thus, all executions of $\mu$ACP remain within linear bounds of system resources.
\end{corollary}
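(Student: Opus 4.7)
The plan is to derive each of the four inequalities as an accumulation of the per-action consumption function $\rho$ introduced in the Resource Consumption Function definition, while distinguishing between flow resources (bandwidth, CPU, energy), which scale with horizon $T$, and state resources (memory), which do not. The starting point is the feasibility precondition $\rho(a_i,v,o,p) \le \mathcal{R}_c$, interpreted componentwise, combined with the assumption $\lambda_i \le \lambda_{\max}$ so that the total number of actions issued by $a_i$ in the window $[0,T]$ is at most $N_i(T) \le \lambda_{\max}\,T$.

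For the three flow bounds, I would fix one resource coordinate at a time. For bandwidth, I would interpret $B$ as a per-unit-time capacity and let $\rho_b^{\max}$ denote the maximal per-action bandwidth cost admitted by the feasibility condition; summing over the at most $N_i(T)$ actions gives a cumulative bandwidth usage of at most $\rho_b^{\max}\,\lambda_{\max}\,T$, which is bounded by $B\,T$ after noting that $\rho_b^{\max}\,\lambda_{\max} \le B$ (this is exactly what the feasibility precondition must imply when interpreted as a rate constraint). The CPU and energy cases are mechanically identical, with $C_p$ and $E$ playing the role of $B$, yielding the linear bounds $C_p\,T$ and $E\,T$ respectively.

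For memory, the argument diverges: memory is a state variable, not an accumulator. I would appeal directly to the Agent State definition, which declares $M_i$ to be bounded and $H_i$ to be a \emph{bounded} history buffer, together with the implicit invariant that the knowledge base $K_i$ has fixed finite capacity maintained by overwrite or eviction. Consequently $\mathrm{Memory}_i(T) \le M$ holds at every $T$ simply because the state cannot exceed its declared footprint, independently of how many actions have been executed.

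The main obstacle I anticipate is the memory case rather than the flow cases: the statement implicitly assumes that $\mu$ACP's operational semantics maintains the bounded-buffer invariant on $H_i$ and $K_i$ under arbitrary action sequences, which the excerpt asserts but does not mechanically establish. I would discharge this by stating it as an invariant preserved by each of the four verbs $\PING$, $\TELL$, $\ASK$, $\OBSERVE$, appealing to the fact that their working sets are statically bounded in the calculus, and would flag that the machine-checked Coq development referenced in the abstract provides the formal backing. The remaining algebra—multiplying per-action costs by $\lambda_{\max}\,T$ and comparing to the componentwise budget $\mathcal{R}_c$—is routine.
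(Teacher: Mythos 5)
The paper offers no proof of this corollary at all --- it is asserted immediately after the Resource Consumption Function definition, and the only related argument in the paper is the later Resource Boundedness theorem, proved by a one-line induction (``each step subtracts bounded non-negative amounts from finite counters''). Your reconstruction --- summing per-action costs over at most $\lambda_{\max}T$ actions for the flow resources, and treating memory as a state invariant maintained by bounded buffers rather than an accumulator --- is the natural way to make the claim precise, and your separation of flow versus state resources is a genuine improvement on what the paper provides.

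There is, however, one step you gloss over that does not actually follow from the stated definitions. The feasibility condition is \emph{per-action}: $\rho(a_i,v,o,p)\le\mathcal{R}_c$ componentwise, so each action's bandwidth cost is at most $B$. Summing over $N_i(T)\le\lambda_{\max}T$ actions therefore yields $\mathrm{Bandwidth}_i(T)\le B\,\lambda_{\max}\,T$, not $B\,T$. You close this gap by asserting $\rho_b^{\max}\,\lambda_{\max}\le B$ and claiming it is ``exactly what the feasibility precondition must imply when interpreted as a rate constraint'' --- but that is a reinterpretation of $B$ as a per-unit-time capacity, which is an additional modelling assumption, not a consequence of the definition $\mathcal{R}_c=(M,B,C,E)$ with $\rho\le\mathcal{R}_c$. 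The same issue affects the CPU and energy bounds (note also that the corollary writes $C_p$ where the budget defines $C$). To make the corollary literally true you must either (a) redefine the budget components as rates and restate feasibility as a rate condition, or (b) weaken the conclusion to $O(\lambda_{\max}T)$ per resource. Your memory argument, by contrast, is sound as stated, and your observation that the bounded-buffer invariant on $H_i$ and $K_i$ is the real proof obligation there is exactly right.
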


\begin{remark}
The model is intentionally agnostic about the internal architecture of agents
(BDI, reactive, learning, etc.). Only the observable communication and resource costs
are relevant at this level of abstraction.
\end{remark}

\subsection{Syntax and Semantics}
\label{sec:syntax-semantics}

\paragraph{Message Structure}
Messages in $\mu$ACP are defined by the following grammar:
\[
m ::= \langle h, v, o, p \rangle
\]
where
\begin{itemize}
  \item $h = \text{header}(id, seq, q, flags)$ is a fixed-length header,
  \item $v \in \{\PING, \TELL, \ASK, \OBSERVE\}$ is a verb,
  \item $o$ is a sequence of TLV (type–length–value) options,
  \item $p \in \{0,1\}^*$ is the payload.
\end{itemize}

\begin{definition}[Well-Formed Message]
A message is well-formed iff (i) the header has size 64 bits;
(ii) the verb field encodes one of four values in 2 bits;
(iii) each option is consistent with its declared length;
(iv) the total option size does not exceed 1024 bytes;
(v) payload size $\leq 2^{16}-1$ bytes.
\end{definition}

\paragraph{Operational Semantics.}
We give semantics via a labeled transition system
$(\mathcal{S}, \mathcal{L}, \to)$ where labels
$\ell \in \mathcal{L}$ represent communication actions
$(sender, receiver, verb, options, payload, channel)$.

\begin{inference}[Send]
\[
\frac{
  m = \langle h,v,o,p \rangle \ \text{well-formed} \quad
  \rho(a_i,v,o,p) \leq \mathcal{R}_c
}{
  \sigma \xrightarrow{(a_i,a_j,v,o,p,c)} \sigma'
}
\]
where $\sigma'$ updates $a_i$'s history and resources.
\end{inference}

\begin{inference}[Receive]
\[
\frac{
  \text{channel}(c)=\text{stable} \quad
  \rho(a_j,v,o,p) \leq \mathcal{R}_c
}{
  \sigma \xrightarrow{(a_i,a_j,v,o,p,c)} \sigma'
}
\]
where $\sigma'$ updates $a_j$'s knowledge base and history.
\end{inference}

\paragraph{Resource Semantics}
Instead of continuous ODEs, we use discrete invariants:
every send or receive reduces the resource counters,
and feasibility requires non-negativity.

\begin{theorem}[Resource Boundedness]
If all actions are feasible and each agent’s resources are finite at $t=0$,
then resources remain non-negative for all finite executions.
\end{theorem}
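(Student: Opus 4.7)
The plan is to prove Resource Boundedness by straightforward induction on the length of a finite execution trace, interpreting the feasibility premise pointwise against the current residual budget rather than the initial one. Before starting the induction I would make explicit what "resources remain non-negative" means at the semantic level: introduce a residual budget map $\mathcal{R}_c^{(t)} \in \mathbb{R}^4$ with components for memory, bandwidth, CPU, and energy, defined by $\mathcal{R}_c^{(0)} = (M,B,C,E)$ and $\mathcal{R}_c^{(t+1)} = \mathcal{R}_c^{(t)} - \rho(\alpha_{t+1})$, where $\alpha_{t+1}$ is the labeled action fired by either \textsc{Send} or \textsc{Receive}. This is the natural reading of the phrase ``$\sigma'$ updates $a_i$'s history and resources'' in the two inference rules, and fixing it at the outset avoids an ambiguity in how $\mathcal{R}_c$ is reused as both a global budget symbol and a per-state residual.

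With this setup in place, the base case is immediate: by hypothesis each component of $\mathcal{R}_c^{(0)}$ is finite and positive, so $\mathcal{R}_c^{(0)} \geq 0$ componentwise. For the inductive step, assume $\mathcal{R}_c^{(t)} \geq 0$ after $t$ transitions of the LTS. The $(t{+}1)$st transition fires under either \textsc{Send} or \textsc{Receive}; in both cases the premise $\rho(a_i,v,o,p) \leq \mathcal{R}_c$ is, after the reinterpretation above, the feasibility constraint $\rho(\alpha_{t+1}) \leq \mathcal{R}_c^{(t)}$. Subtracting, $\mathcal{R}_c^{(t+1)} = \mathcal{R}_c^{(t)} - \rho(\alpha_{t+1}) \geq 0$ componentwise. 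Since the execution is finite, iterating the inductive step a finite number of times discharges the claim. I would also note that the four resource dimensions can be handled uniformly because $\rho$ is defined coordinatewise and the feasibility order on $\mathbb{R}^4$ is the product order.

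The main obstacle is not the induction itself, which is essentially a one-line monotonicity argument, but pinning down the semantics so the argument even type-checks. Specifically, the inference rules as written appeal to the initial budget $\mathcal{R}_c$ in the feasibility premise, yet the intended invariant is about cumulative non-negativity; without the residual-budget reading above, an adversary could fire infinitely many actions each individually within $\mathcal{R}_c$ and drive the aggregate negative. I would therefore spend a short paragraph justifying the residual reading as the only one consistent with the Concrete Resource Bounds corollary and with the discrete-invariant stance declared under ``Resource Semantics.'' Once that is in place, the theorem follows, and the same induction also yields the stronger statement that $\mathcal{R}_c^{(t)}$ is monotonically non-increasing along any execution, which is what subsequent results on amortized $O(1)$ per-message complexity will want to cite.
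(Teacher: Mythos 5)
Your proof takes the same basic route as the paper's --- induction on the length of the execution --- but you supply a step the paper's own one-line argument actually omits. The paper's proof reads, in full: ``each step subtracts bounded non-negative amounts from finite counters, hence values remain non-negative,'' which as stated is a non sequitur --- repeatedly subtracting non-negative amounts from a finite counter does \emph{not} keep it non-negative. The inference that makes the induction close is precisely your residual-budget reinterpretation: the feasibility premise $\rho(a_i,v,o,p)\leq\mathcal{R}_c$ must be read against the \emph{current} residual $\mathcal{R}_c^{(t)}$ rather than the fixed initial budget, so that the inductive step becomes $\mathcal{R}_c^{(t+1)}=\mathcal{R}_c^{(t)}-\rho(\alpha_{t+1})\geq 0$ directly from the premise of the fired rule. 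Your observation that the literal reading of the \textsc{Send}/\textsc{Receive} rules (feasibility against the static budget) would let an adversary drive the aggregate negative is correct and is exactly the gap in the paper's version; your paragraph justifying the residual reading via the Concrete Resource Bounds corollary and the ``discrete invariants'' remark is the right way to resolve the ambiguity. The bonus monotonicity claim ($\mathcal{R}_c^{(t)}$ non-increasing) also follows immediately and is a useful strengthening. In short: same induction skeleton, but your version is the one that actually proves the theorem.
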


\begin{proof}
By induction on the length of executions: each step subtracts bounded non-negative
amounts from finite counters, hence values remain non-negative.
\end{proof}






\subsection{Minimal Verb Set Theory}
\label{sec:verb-theory}

\paragraph{Verb Semantics}
The core insight is that four verbs suffice.
Their semantics are given in a Kripke-style model
$(W,R,V)$ where $W$ is a set of worlds,
$R$ accessibility, $V$ valuation:

\begin{align}
[\![\PING]\!] &= \lambda a_i,a_j,w.\ \Reachable(a_i,a_j,w) \\
[\![\TELL]\!] &= \lambda a_i,a_j,w,\phi.\ \Bel(a_i,\phi,w)\land\Inform(a_i,a_j,\phi,w) \\
[\![\ASK]\!]  &= \lambda a_i,a_j,w,\phi.\ \Want(a_i,\Bel(a_j,\phi),w) \notag \\
              &\quad \land \Query(a_i,a_j,\phi,w) \\
[\![\OBSERVE]\!] &= \lambda a_i,a_j,w,\phi.\ \Happens(\phi,w)\land\Notify(a_i,a_j,\phi,w)
\end{align}

\paragraph{Completeness via Encoding.}
Let $\mathcal{F}$ be the set of FIPA-ACL performatives.
Define a translation function $\tau : \mathcal{F} \to \mathcal{V}\times\mathcal{O}$.

\begin{example}
\begin{align}
\llbracket \PING \rrbracket &= \lambda a_i,a_j,w.\ \Reachable(a_i,a_j,w) \\
\llbracket \TELL \rrbracket &= \lambda a_i,a_j,w,\phi.\ \Bel(a_i,\phi,w)\land\Inform(a_i,a_j,\phi,w) \\
\llbracket \ASK \rrbracket  &= \lambda a_i,a_j,w,\phi.\ \Want(a_i,\Bel(a_j,\phi),w)\land \\ 
                                & \Query(a_i,a_j,\phi,w) \\
\llbracket \OBSERVE \rrbracket &= \lambda a_i,a_j,w,\phi.\ \Happens(\phi,w)\land\Notify(a_i,a_j,\phi,w)
\end{align}
\end{example}

\begin{definition}[Semantic Completeness]
Let $\mathcal{F}$ be the set of FIPA-ACL performatives.
$\muACP$ is \emph{semantically complete} for $\mathcal{F}$ iff
\[
\forall f \in \mathcal{F}\ \exists e(f)\in\mathcal{V}\times\mathcal{O}
\quad\text{such that}\quad
\Effect(f)=\Effect\bigl(e(f)\bigr),
\]
where $\Effect(\cdot)$ denotes the perlocutionary effect on the agent state space.
\end{definition}

\begin{theorem}[Verb Set Completeness]
Every finite-state FIPA-ACL performative with bounded nesting depth
has an observationally equivalent $\mu$ACP encoding.
\end{theorem}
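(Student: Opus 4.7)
The plan is to proceed by exhibiting an explicit translation $\tau : \mathcal{F} \to \mathcal{V} \times \mathcal{O}$ and then proving that $\tau$ preserves the perlocutionary effect, so that Definition of semantic completeness applies; observational equivalence then follows by lifting this pointwise preservation to full protocol traces via a bisimulation between the FIPA finite-state protocol automaton and the $\mu\mathrm{ACP}$ transition system introduced in Section~\ref{sec:syntax-semantics}.

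First, I would partition the FIPA-ACL performatives according to their illocutionary force and fix $\tau$ on each class: assertives ($inform$, $confirm$, $disconfirm$, $agree$, $refuse$, $failure$) go to $\TELL$; directives ($request$, $query$-$if$, $query$-$ref$, $cfp$, $propose$) go to $\ASK$; event and subscription acts ($subscribe$, $cancel$, $proxy$-$notify$) go to $\OBSERVE$; and pure meta-acts ($not$-$understood$, discovery, keep-alive) go to $\PING$. For each $f$, the distinguishing parameters that FIPA carries in its message envelope ($language$, $ontology$, $protocol$, $reply$-$with$, $in$-$reply$-$to$, and the propositional content $\phi$) are packed into the TLV option field $o$, which by the well-formedness rules can carry up to $1024$ bytes of typed metadata. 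This is enough to encode every standard FIPA slot while keeping the verb field at $2$ bits.

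Next, I would verify, verb by verb, that the Kripke semantics of $\tau(f)$ instantiated with these options coincides with $\Effect(f)$. The base cases follow directly from the denotations in Section~\ref{sec:verb-theory}: for example, $inform(a_i, a_j, \phi)$ has exactly the effect $\Bel(a_i, \phi) \wedge \Inform(a_i, a_j, \phi)$, which is $\llbracket \TELL \rrbracket(a_i, a_j, w, \phi)$; similarly $query$-$if$ matches $\llbracket \ASK \rrbracket$ by unfolding the $\Want\circ\Bel$ modality. For composite performatives, I would do induction on the nesting depth $d$, which is finite by hypothesis: each nested act is encoded by the inductive hypothesis, and the outer act is realized by the sequential composition of $\mu\mathrm{ACP}$ transitions, justified by Inference rules Send/Receive. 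Finiteness of the underlying state space ensures the inductive process terminates and that the resulting encoding is itself finite-state.

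The main obstacle, I expect, will be performatives whose FIPA semantics are specified in terms of mutual belief or commitments that span several exchanges, notably the $cfp$/$propose$/$accept$-$proposal$/$reject$-$proposal$ chain of the contract-net protocol. A single $\mu\mathrm{ACP}$ message cannot capture an epistemic fixed point of the form ``common knowledge that $\phi$''; instead, one must show that a short sequence of $\ASK$ and $\TELL$ transitions produces an observationally indistinguishable trace. I would discharge this by constructing a weak bisimulation between the FIPA protocol automaton (restricted to traces of depth $\leq d$) and the $\mu\mathrm{ACP}$ transition system, where silent moves correspond to internal resource bookkeeping and labelled moves correspond to the encoded verb/option pairs; the bounded-nesting-depth hypothesis guarantees that this bisimulation relation is finite and decidable, closing the argument.
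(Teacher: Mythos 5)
Your proposal is correct and follows essentially the same route as the paper: an explicit translation $\tau$ into verb--option pairs, base cases discharged by matching the Kripke denotations of \TELL, \ASK, \OBSERVE, \PING{} against the FIPA feasibility/rational-effect conditions, and an induction on the bounded nesting depth using correlation identifiers carried in TLV options to lift the pointwise correspondence to whole conversations. Your classification by illocutionary force (assertives to \TELL, directives to \ASK, subscriptions to \OBSERVE, meta-acts to \PING) matches the paper's representative cases (e.g.\ INFORM $\mapsto$ TELL, QUERY-IF $\mapsto$ ASK, SUBSCRIBE $\mapsto$ OBSERVE, NOT-UNDERSTOOD $\mapsto$ PING with an error option), and your observation that single performatives such as REQUEST or the contract-net chain must expand to short \emph{sequences} of $\mu$ACP messages is exactly how the paper handles them. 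The one genuine divergence is the target relation: you promise a weak \emph{bisimulation} between the FIPA automaton and the $\mu$ACP transition system, whereas the paper's detailed argument (Theorem~\ref{thm:trace}) constructs only a one-directional simulation relation $R$ and concludes trace \emph{inclusion} $T_{FIPA}(P)\subseteq_{\text{proj}} T_{\mu ACP}(\tau(P))$. Your symmetric formulation is actually what the phrase ``observationally equivalent'' in the theorem statement literally demands, so it buys a stronger (and more honest) conclusion --- but it also incurs an extra obligation you do not discharge: you must show that the $\mu$ACP side introduces no \emph{new} observable traces beyond those of the FIPA protocol (e.g.\ interleavings of the intermediate messages in a multi-message encoding that have no FIPA counterpart). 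Either prove that reverse containment, or weaken your claim to the one-way simulation the paper settles for.
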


\begin{proof}[Proof Sketch]
Define an observation function \(\obs\) that erases implementation details.
For each performative \(f\in\mathcal{F}\) we show
\[
T_{\mathrm{FIPA}}(f)\ \sqsubseteq_{\obs}\ T_{\muACP}(\tau(f)).
\]
Base cases follow from direct semantic correspondence \\ (e.g., \(\mathrm{INFORM}\leftrightarrow\mathrm{TELL}\)).
Inductive step: nested performatives are encoded using correlation IDs and TLV,
preserving state transitions. Since conversations are finite-state with bounded
nesting, the encoding preserves traces up to observational equivalence.
\end{proof}

\paragraph{Encoding Procedural Performatives}
Special performatives (e.g.\ FORWARD, PROXY) require procedural encoding.

\begin{definition}[Procedural Encoding]
For a performative $f$ requiring delegation or meta-communication:
\[
\tau(f,cid) = (v, \option(\text{PROC},f) \cdot \option(\text{CID},cid))
\]
where $v\in\{\ASK,\TELL\}$ and $cid$ is a correlation identifier.
\end{definition}



\begin{theorem}[Procedural Completeness]
Any procedural performative whose protocol admits a finite-state automaton 
can be encoded in $\mu$ACP using at most $k$ messages, where $k$ is the number of states.
\end{theorem}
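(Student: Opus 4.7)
The plan is to reduce the problem to showing that a single state transition of the protocol's FSA can be faithfully encoded as a single $\muACP$ message, and then to bound the length of any accepting run by the number of states. First I would fix the procedural performative $f$ together with its underlying automaton $A=(Q,\Sigma,\delta,q_0,F)$ with $|Q|=k$, where $\Sigma$ labels the inter-agent exchanges (e.g., forward, relay, acknowledge) and $F \subseteq Q$ is the set of terminal (completion) states. By the preceding Procedural Encoding definition, every transition label admits a realization $\tau(f,cid) = (v,\ \option{PROC}{f}\cdot\option{CID}{cid})$ with $v\in\{\ASK,\TELL\}$, in which the PROC TLV carries the current state tag and $cid$ preserves conversational identity across the exchange. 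Thus each edge of $A$ is realized by exactly one well-formed $\muACP$ message, and the well-formedness conditions of Section~\ref{sec:syntax-semantics} are satisfied because PROC and CID are bounded TLV options.

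Next I would bound the number of messages required to drive $A$ from $q_0$ to some $q_f \in F$. Since $A$ is finite with $|Q|=k$, any shortest accepting run is a simple path: if a run revisits a state $q$, the intervening segment forms a cycle whose excision preserves reachability to $F$ by determinism of $\delta$. A simple path visits at most $k$ distinct states and therefore contains at most $k-1$ transitions; counting the initial invocation message as part of the encoding yields the claimed bound of $k$. Semantic faithfulness then follows by appealing to the preceding Verb Set Completeness theorem applied edge by edge: each transition message preserves the perlocutionary effect modulo the observation function $\obs$, and the shared $cid$ guarantees that the concatenation of such messages induces the same global state change on $\prod_i \mathcal{S}_i$ as executing $f$ in its reference semantics.

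The main obstacle will be handling procedural performatives whose specifications permit cycles that are semantically essential---for instance, iterative re-forwarding under non-responsive intermediaries, or retry-with-backoff patterns. In those cases the "excise cycles" step silently appeals to a notion of non-redundancy that must be made precise. I would address this by defining a run as \emph{productive} when it strictly monotonically advances the set of established perlocutionary preconditions (tracked in $K_i$), and then showing that every productive accepting run of $A$ is simple; the $k$-bound then holds for productive executions, with non-productive retries accounted for by the QoS layer rather than the calculus itself. A secondary subtlety concerns PROXY-style acts where a single logical step triggers fan-out to several recipients; I would resolve this by counting each point-to-point emission as a separate edge in $A$, so that fan-out is absorbed into $|\Sigma|$ and the $k$-message bound remains intact with respect to the chosen FSA model.
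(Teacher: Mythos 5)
Your proposal matches the paper's proof in its essential structure: each automaton transition is realized by one \muACP{} message via the procedural encoding $\tau(\sigma,cid)$ with $v\in\{\ASK,\TELL\}$, and the $k$-message bound follows from counting states along a run, with correlation identifiers guaranteeing trace alignment with the original automaton. The one substantive difference is in your favour: the paper's proof simply asserts that each run visits at most $k$ distinct states and hence needs at most $k$ messages, silently conflating distinct states with transitions and assuming runs are simple, whereas your explicit treatment of cycles (shortest accepting runs are simple paths with at most $k-1$ edges, and semantically essential retries are delegated to a notion of productive runs or to the QoS layer) makes precise a step the published argument leaves implicit.
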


\begin{proof}
Let $C = (Q,q_0,\Sigma,\delta,F)$ be the finite-state automaton representing the procedural performative, 
with $|Q| = k$. For each transition $(q,\sigma,q') \in \delta$ we define the $\mu$ACP encoding 
$\tau(\sigma,cid) = (v, (PROC,\sigma)\cdot(CID,cid))$ where $v \in \{\textsf{ask},\textsf{tell}\}$ 
and $cid$ is a correlation identifier. 

Since each run of $C$ visits at most $k$ distinct states, the encoding requires no more than 
$k$ messages to realize the full execution. Observational equivalence is preserved because 
correlation identifiers ensure trace alignment with the original automaton. Hence $\mu$ACP 
is procedurally complete with the stated bound.
\end{proof}

\section{Formal Preliminaries and Assumptions}
\label{sec:prelim}

We specify the system model, agent semantics, and $\mu$ACP primitives used in the proofs.

\paragraph{System model}
We adopt the partially synchronous model of \cite{dwork1988consensus}:

\begin{assumption}[Partial synchrony]
There exists an unknown Global Stabilization Time (GST) after which every message on a stable channel is delivered within bound $\Delta$; before GST delays may be unbounded.
\end{assumption}

\begin{assumption}[Channels]
Channels satisfy: (i) fair loss (infinitely many sends imply infinitely many deliveries), (ii) arbitrary duplication, (iii) no spurious creation, (iv) eventual timeliness after GST.
\end{assumption}

\begin{assumption}[Failures]
Up to $f<n/2$ agents may suffer crash/omission faults. Byzantine faults are excluded (would require authenticated channels).
\end{assumption}

\paragraph{Agent model}
\begin{definition}[Finite-state agent]
An agent $a$ is a labeled transition system $A=(S,s_0,\Sigma,\to,L)$ with finite $S$, initial $s_0$, actions $\Sigma$ (internal and communication), transition relation $\to\subseteq S\times\Sigma\times S$, and labeling $L:S\to 2^{AP}$.  
Mental-state atoms (belief, intention, desire) are included in $AP$ and updated by message processing.
\end{definition}

\paragraph{Protocols and traces}
\begin{definition}[Protocol automaton]
A protocol is a finite conversation automaton $\mathcal{C}=(Q,q_0,\Sigma,\delta,F)$ with finite $Q$ and transition function $\delta:Q\times\Sigma\to Q$.
\end{definition}

\begin{definition}[Trace]
A global run induces a trace of communication actions; observable projection removes internal actions, leaving messages with TLV metadata.
\end{definition}

\begin{definition}[Bounded nesting]
A protocol has depth $\le d$ if at most $d$ nested conversations are active in any run. This ensures finite-state representability of FIPA-style conversations.
\end{definition}

\paragraph{$\mu$ACP primitives}
Messages are tuples $\langle hdr,v,O,p\rangle$, where $v\in\{\text{PING},\text{TELL},\text{ASK},\text{OBSERVE}\}$, $O$ is a finite TLV option sequence from a bounded set, and $p$ is the payload. The header encodes sequence/correlation IDs and QoS flags.

All subsequent proofs explicitly cite the assumptions employed.

\section{Theorems}
\label{sec:theorems}

\subsection{Theorem 1: Trace simulation completeness}
\label{sec:thm1}

\begin{theorem}[Trace-simulation completeness]
\label{thm:trace}
Let $P$ be any finite-state FIPA protocol with bounded nesting depth $d$ (hence representable as a finite conversation automaton $\mathcal{C}_F$). There is a computable translation $\tau$ mapping each FIPA performative to a $\mu$ACP expression (a $\mu$ACP message sequence possibly using TLV options and correlation IDs) such that every observable trace of $P$ has a corresponding observable trace in the $\mu$ACP implementation:
\[
T_{FIPA}(P)\ \subseteq_{\text{proj}}\ T_{\mu ACP}(\tau(P)),
\]
i.e., $\tau(P)$ \emph{trace-simulates} $P$ up to observable projection.
Here $\subseteq_{\text{proj}}$ denotes inclusion up to observable projection:
every FIPA trace has an observationally equivalent $\mu$ACP trace once
internal steps are erased.
\end{theorem}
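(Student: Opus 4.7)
The plan is to prove the statement by exhibiting an explicit translation $\tau$ and then establishing a simulation relation between the FIPA conversation automaton $\mathcal{C}_F$ and the $\mu$ACP labeled transition system obtained by instantiating the rules \textsc{Send}/\textsc{Receive} of Section~\ref{sec:syntax-semantics}. I would first fix the observation map $\obs(\cdot)$ that projects an execution onto its externally visible message sequence (sender, receiver, illocutionary force, propositional content, correlation context), and take observational equivalence to be equality of these projections. Trace inclusion up to projection then reduces to showing that every step of $\mathcal{C}_F$ is matched by a finite block of $\mu$ACP steps whose projection coincides with the FIPA-observable event.

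Next I would define $\tau$ compositionally. For atomic performatives, the translation is given by the Kripke semantics of Section~\ref{sec:verb-theory}: assertives map to $\TELL$, directives/interrogatives to $\ASK$, event reports to $\OBSERVE$, and reachability/keep-alive acts to $\PING$, with illocutionary qualifiers (e.g.\ \textsc{confirm}, \textsc{disconfirm}, \textsc{request-when}) carried in TLV options via the \texttt{PROC} field. For procedural performatives (\textsc{forward}, \textsc{propose}, \textsc{cfp}, etc.) I would invoke Definition~4 (Procedural Encoding) and attach a correlation ID \texttt{CID} that binds the $\mu$ACP message to its enclosing conversation, so that nested conversations remain distinguishable in the trace projection.

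The core argument is then an induction on the bounded nesting depth $d$. At depth $0$, atomic performatives are handled by the direct semantic correspondence already listed, and a one-to-one correspondence between FIPA steps and $\mu$ACP message emissions gives identical projections. For the inductive step, assume any FIPA conversation of depth $<d$ is trace-simulated by its $\tau$-image; a depth-$d$ conversation is a finite composition of such subconversations controlled by a finite-state dispatcher. Using fresh correlation IDs to tag each nested subconversation, the $\mu$ACP encoding executes the dispatcher transitions as internal steps (invisible under $\obs$) interleaved with the sub-encodings, so by the induction hypothesis and Theorem~2 (Procedural Completeness) the combined projection reproduces the FIPA trace. Finiteness of $\mathcal{C}_F$ guarantees that only finitely many \texttt{CID} values are needed, so the encoding stays within the well-formedness bounds of Definition~3.

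The main obstacle I anticipate is the faithful treatment of performatives whose FIPA semantics involve simultaneous belief- and intention-update on both sender and receiver (notably \textsc{agree}, \textsc{refuse}, and \textsc{propose}/\textsc{accept-proposal} pairs), because the $\mu$ACP verbs individually update only one mental-state component per step. The remedy is to prove a small lemma that a two-step $\mu$ACP exchange (an $\ASK$ with a \texttt{PROC} tag followed by a $\TELL$ carrying the same \texttt{CID}) induces the same joint mental-state update as the FIPA performative, since the intermediate internal steps are erased by $\obs$. Once this lemma is established, the induction above goes through and trace inclusion up to observable projection follows.
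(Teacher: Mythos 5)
Your proposal is sound and reaches the same conclusion by a genuinely different induction. The paper also begins by fixing a constructive translation $\tau$ (with essentially the same verb assignments and TLV/correlation-ID conventions you describe), but its core argument is structured around an explicit simulation relation $R$ on \emph{global configurations} — matching belief/intention atoms, conversation-id-to-correlation-id correspondence, and mirrored pending messages — together with a local-simulation lemma (each FIPA step is matched by a finite block of $\mu$ACP steps preserving $R$), and then an induction on \emph{trace prefixes} to lift this to global trace inclusion. You instead induct on the nesting depth $d$, viewing a depth-$d$ conversation as a finite-state dispatcher composed with depth-$(d{-}1)$ subconversations and appealing to the Procedural Completeness theorem for the dispatcher. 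Your route is more modular and reuses Theorem~2 nicely; the paper's route buys something your sketch currently lacks: the configuration-level relation $R$ is what justifies that \emph{arbitrary interleavings} of concurrent conversations (and nondeterministic delivery) in the global FIPA trace can be matched, since TLV-isolated conversation contexts let the prefix induction step through independently of which conversation advances next. Your depth induction describes the hierarchical structure of a single conversation but does not by itself establish that trace inclusion is preserved under parallel composition of concurrently active conversations at the same depth — that compositionality step needs either an explicit simulation relation on global states (as in the paper) or a separate congruence argument. Your anticipated lemma on joint mental-state updates via an $\ASK$/$\TELL$ pair sharing a correlation ID is exactly the paper's treatment of the REQUEST case, so that part aligns. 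With the interleaving/compositionality step made explicit, your proof would be complete and arguably cleaner.
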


\noindent\textbf{Proof} We present $\tau$ and prove simulation by constructing a relation $R$ between configurations of the FIPA protocol and configurations of the $\mu$ACP system, then show $R$ is a simulation relation. The proof proceeds in four parts: (A) define $\tau$, (B) define $R$, (C) show local simulation for each performative, (D) extend to global traces using compositionality and bounded nesting.

\paragraph{Translation $\tau$} For each canonical FIPA performative $f$ we choose a translation as a finite $\mu$ACP expression plus TLVs. Representative cases (complete list in supplementary artifact) are:
\[
\begin{array}{r@{\ }l}
\tau(\text{INFORM}(s,r,\phi)) & =\ \ \text{TELL}(s,r,\phi) \\
\tau(\text{REQUEST}(s,r,\alpha)) & =\ \ \text{ASK}(s,r,\alpha) \ ;\   \\ & \text{TELL}(r,s,\texttt{done}(\alpha)) \\
\tau(\text{QUERY-IF}(s,r,\psi)) & =\ \ \text{ASK}(s,r,\psi) \\
\tau(\text{SUBSCRIBE}(s,r,t)) & =\ \ \text{OBSERVE}(s,r,t) \\
\tau(\text{NOT-UNDERSTOOD}(s,r,m)) & =\ \ \text{PING}(r,s)\ ;\    \\ &  \text{option}(\text{ERR},\text{orig}=m)
\end{array}
\]
In all cases, necessary contextual fields (conversation id, correlation id, QoS) are encoded as TLVs in the $\mu$ACP message header or options.

\paragraph{Simulation relation $R$} Let $\mathit{Conf}_F$ be the set of global configurations of the FIPA system (agent local states + network buffers) and $\mathit{Conf}_\mu$ those of the $\mu$ACP implementation. Define $R\subseteq\mathit{Conf}_F\times\mathit{Conf}_\mu$ such that $(c_F,c_\mu)\in R$ iff:
\begin{enumerate}
  \item For each agent $a$, the local belief/intention atoms visible in $c_F$ equal those visible in $c_\mu$ (i.e., projection to $AP$ coincide).
  \item For each active conversation id in $c_F$, there is a corresponding correlation id in $c_\mu$ with matching conversation state.
  \item Pending messages in $c_F$ are mirrored by pending $\mu$ACP messages with equivalent TLV-encoded payloads (up to allowed duplication and ordering differences).
\end{enumerate}

Such an $R$ exists at initial states because initial beliefs and empty buffers are identical under the translation.

\paragraph{Local simulation (case analysis).} We show: if $(c_F,c_\mu)\in R$ and $c_F\xrightarrow{f}c_F'$ (one FIPA performative executed), then there exists a finite $\mu$ACP transition sequence $c_\mu\xrightarrow{\tau(f)}^* c_\mu'$ with $(c_F',c_\mu')\in R$ and the same observable effect.

We present the representative cases; all other performatives are handled similarly by TLV encoding.

\emph{Case INFORM:} Suppose agent $s$ executes $\text{INFORM}(s,r,\phi)$ in $c_F$; by FIPA semantics this requires $Bel_s(\phi)$ in $s$ and results in $Bel_r(\phi)$ in $r$. Under $R$, $Bel_s(\phi)$ holds in $c_\mu$. The single $\mu$ACP message $\text{TELL}(s,r,\phi)$ (with TLV content-type=proposition) when delivered triggers the $\mu$ACP TELL transition which adds $\phi$ to $r$'s beliefs. Thus $c_\mu \xrightarrow{\text{TELL}(s,r,\phi)} c_\mu'$ and $(c_F',c_\mu')\in R$.

\emph{Case REQUEST:} FIPA REQUEST is a directive; semantics often involve: $s$ requests $r$ to perform action $\alpha$, possibly leading to $r$ eventually performing $\alpha$. Translate to $\text{ASK}(s,r,\alpha)$ followed by $\text{TELL}(r,s,\texttt{done}(\alpha))$. Local simulation proceeds by first simulating the ASK (which updates $s$'s postconditions in $c_\mu$ to reflect the pending request) and then simulating the TELL when $r$ performs/acknowledges the action; TLVs carry identifiers so the two messages correspond to the same conversation. The composition of these two $\mu$ACP steps reproduces the FIPA effect.

\emph{Meta-performatives:} NOT-UNDERSTOOD is encoded as a PING-like response carrying an error TLV; delivery of that $\mu$ACP message sets an error flag in the recipient's state, matching the FIPA semantics for NOT-UNDERSTOOD.

The crucial observation is that every FIPA pre/post condition is syntactically realizable via finite sequences of $\mu$ACP primitive transitions plus TLV state updates. This is a finite, constructive mapping.

\paragraph{Global traces and bounded nesting} Because $P$ is finite-state with bounded nesting depth $d$, conversations and their correlation IDs are a finite set. The simulation argument above extends to interleavings: consider any observable FIPA trace $t = m_1 m_2 \dots m_k$. By induction on prefixes, we show that after simulating the prefix $m_1\dots m_i$ in $\mu$ACP we reach a state $c_\mu^{(i)}$ with $(c_F^{(i)},c_\mu^{(i)})\in R$. The base case $i=0$ holds. The induction step follows from the local simulation lemma and the fact that concurrency of messages acting on disjoint conversation ids is preserved because TLVs isolate conversation contexts. For possibly interfering concurrent messages, the $\mu$ACP encoding places identical constraints (e.g., same conversation id) and preserves ordering or allows nondeterministic delivery identical to the FIPA semantics; hence every observable FIPA prefix has a corresponding $\mu$ACP prefix. This yields trace inclusion:
\[
T_{FIPA}(P)\subseteq_{\text{proj}} T_{\mu ACP}(\tau(P)).
\]

\paragraph{Limitations} The construction crucially uses bounded nesting depth and finite-state assumption. Protocols requiring unbounded recursive creation of nested conversations cannot be simulated by any finite-state translation. This is an essential, stated limitation of the theorem.

\subsection{Theorem 2: Semantic compression}
\label{sec:thm2}

\begin{theorem}[Compression bound]
\label{thm:compression}
Let $\mathcal{M}$ be the message-space (verbs, options, payload) and $\mathcal{D}$ a probability distribution on $\mathcal{M}$. Let $H(\mathcal{D})$ denote the Shannon entropy of the source. Let $H_{hdr}$ be a fixed header size (bits), $k_{\max}$ the maximum number of distinct TLV option types, and assume payloads are encoded by an optimal prefix code. Then the $\mu$ACP encoding $\mathcal{E}$ (header + verb code + TLV option indices + payload encoding) satisfies
\[
\mathbb{E}_{m\sim\mathcal{D}}[|\mathcal{E}(m)|] \le H(\mathcal{D}) + H_{hdr} + \lceil\log_2 k_{\max}\rceil + 3.
\]
\end{theorem}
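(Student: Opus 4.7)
The plan is to decompose the encoded message by components and bound each term separately using Shannon's source coding theorem, then recombine via linearity of expectation and the entropy chain rule. Concretely, for any $m=\langle h,v,o,p\rangle$ the encoding splits as $|\mathcal{E}(m)| = H_{hdr} + L_v(m) + L_o(m) + L_p(m)$, where $L_v$, $L_o$, $L_p$ denote the bit lengths allocated to the verb, TLV option indices, and payload, respectively. The header term is deterministic and contributes exactly $H_{hdr}$, so only the three remaining expectations need to be controlled.

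First, I would bound the verb component. Since $V$ ranges over four symbols drawn from the marginal of $\mathcal{D}$, an optimal prefix code (Huffman or Shannon--Fano) yields $\mathbb{E}[L_v]\le H(V)+1$ by Shannon's source coding theorem and Kraft's inequality. Second, for the options: each TLV type index lies in a universe of size at most $k_{\max}$, so a fixed-length encoding uses $\lceil\log_2 k_{\max}\rceil$ bits per option slot; this is the deterministic worst-case contribution promised by the theorem statement. Third, for the payload, by hypothesis the prefix code is optimal conditioned on $(v,o)$, so $\mathbb{E}[L_p\mid V,O]\le H(P\mid V,O)+1$, and taking outer expectation preserves the bound.

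The recombination step then uses the chain rule $H(\mathcal{D}) = H(V,O,P) = H(V) + H(O\mid V) + H(P\mid V,O)$. Adding the three component bounds gives $\mathbb{E}[|\mathcal{E}(m)|] \le H_{hdr} + H(V) + 1 + \lceil\log_2 k_{\max}\rceil + H(P\mid V,O) + 1$. Since the deterministic option bound $\lceil\log_2 k_{\max}\rceil$ dominates the information-theoretic term $H(O\mid V)$ (the fixed-length code is at worst one bit above entropy by the Kraft bound), one accounts for the slack with a further $+1$, yielding the advertised $H(\mathcal{D}) + H_{hdr} + \lceil\log_2 k_{\max}\rceil + 3$.

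The main obstacle I anticipate is rigorously justifying the treatment of the TLV option field. A single ceiling term $\lceil\log_2 k_{\max}\rceil$ is only clean when the option sequence contributes a bounded number of type indices; for variable-length option lists one must either amortize per message or restrict $o$ to a single dominant TLV and push the rest into the payload's Shannon bound. I would handle this by showing that, since the total option size is capped by the well-formedness constraint (1024 bytes) and type universe by $k_{\max}$, the option entropy $H(O\mid V)$ and its fixed-length encoding differ by at most one bit on average, absorbing the residual into the constant $3$. The remaining steps are routine applications of Shannon's theorem, but this TLV accounting is where care is needed to avoid a bound that grows with the number of options.
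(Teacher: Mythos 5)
Your proposal follows essentially the same route as the paper's proof: decompose $|\mathcal{E}(m)|$ into header, verb, option-index, and payload components, apply the chain rule $H(\mathcal{D})=H(V)+H(O\mid V)+H(P\mid V,O)$, bound each non-header component by its (conditional) entropy plus one via Shannon/Kraft, charge $\lceil\log_2 k_{\max}\rceil$ for a single option-type index, and sum to get the $+3$ constant. The TLV accounting worry you raise is exactly the simplification the paper itself makes --- it writes the general multi-option cost as $\mathbb{E}[K]\cdot(\lceil\log_2 k_{\max}\rceil+\lceil\log_2 L_{\max}\rceil)$ plus value bits and then \emph{absorbs} those factors into constants, stating the bound only for the conservative one-option-index case --- so your fix via the well-formedness cap is in the right spirit; just note that your parenthetical claim that a fixed-length code over $k_{\max}$ symbols is within one bit of $H(O\mid V)$ is false for skewed distributions, but it is also unneeded since the stated bound carries both $H(O\mid V)$ and $\lceil\log_2 k_{\max}\rceil$ additively.
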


\noindent\textbf{Proof} We decompose messages into components and bound expected lengths component-wise.

\paragraph{Entropy decomposition} Write a message $m=(v,O,p)$ where $v$ is verb, $O$ the TLV options sequence (a finite vector of typed values), and $p$ payload. The entropy decomposes as:
\[
H(\mathcal{D}) = H(V) + H(O\mid V) + H(P\mid V,O).
\]

\paragraph{Header cost} Regardless of source entropy, a practical protocol needs a fixed header to carry sequence id, correlation id and QoS; denote its fixed length by $H_{hdr}$ bits. This contributes $H_{hdr}$ to every codeword.

\paragraph{Verb encoding} There are 4 verbs. Let $L_V$ be the expected length of the verb code used by $\mu$ACP. By Shannon's source coding theorem, an optimal prefix code can achieve
\[
H(V) \le L_V \le H(V) + 1.
\]
If for implementation we use a fixed 2-bit code for verbs, then $L_V=2$, and since $H(V)\le\log_2 4=2$ we have $L_V\le H(V)+2$; but using an optimal Huffman code yields $L_V\le H(V)+1$. For a tight bound we adopt the latter:

\[
L_V \le H(V) + 1.
\]

\paragraph{TLV option encoding} Let the set of distinct option \emph{types} used across conversations be of size at most $k_{\max}$. Each option can be represented by its type index (cost $\lceil\log_2 k_{\max}\rceil$ bits) plus a length field and value bits. Suppose the maximal option value length is bounded by $L_{\max}$ bytes; the length field needs $\lceil\log_2 L_{\max}\rceil$ bits. For an option of value length $\ell$ bytes the cost is $\lceil\log_2 k_{\max}\rceil+\lceil\log_2 L_{\max}\rceil+8\ell$ bits. Let $K$ be the (random) number of options in a message; then

\resizebox{\columnwidth}{!}{$
\mathbb{E}[|\text{options}|] =
  \mathbb{E}[K]\cdot (\lceil\log_2 k_{\max}\rceil + \lceil\log_2 L_{\max}\rceil)
  + \mathbb{E}[\text{sum of option payload bits}]
$}

But $\mathbb{E}[\text{sum of option payload bits}]=H(O\mid V)+\epsilon_O$ where $\epsilon_O$ is the coding redundancy (for an optimal prefix code $\epsilon_O\le 1$). Combining gives
\[
\mathbb{E}[|\text{options}|] \le H(O\mid V) + \mathbb{E}[K]\cdot(\lceil\log_2 k_{\max}\rceil + \lceil\log_2 L_{\max}\rceil) + 1.
\]
Since $K$ is bounded in practice by a small constant and $L_{\max}$ is a protocol parameter (constants), we absorb these into a constant additive term. For the theorem statement we provide the simpler conservative bound using one option-type index:
\[
\mathbb{E}[|\text{options}|] \le H(O\mid V) + \lceil\log_2 k_{\max}\rceil + 1.
\]

\paragraph{Payload encoding} Payload $p$ can be encoded by an optimal prefix code achieving expected length
\[
\mathbb{E}[|\text{payload}|] \le H(P\mid V,O) + 1.
\]

\paragraph{Combine components} Summing expected lengths:
\[
\mathbb{E}[|\mathcal{E}(m)|] = H_{hdr} + L_V + \mathbb{E}[|\text{options}|] + \mathbb{E}[|\text{payload}|].
\]
Substitute the bounds from above:
\[
\mathbb{E}[|\mathcal{E}(m)|] \le H_{hdr} + (H(V)+1) + (H(O\mid V) + \lceil\log_2 k_{\max}\rceil + 1) + (H(P\mid V,O)+1).
\]
Collect terms:
\[
\mathbb{E}[|\mathcal{E}(m)|] \le H(\mathcal{D}) + H_{hdr} + \lceil\log_2 k_{\max}\rceil + 3.
\]

This is the announced bound. The additive constants come from the $+1$ penalties of prefix codes and the TLV index; in practice these constants are small and fixed. The bound is information-theoretically sound: no scheme can beat $H(\mathcal{D})$ in expected length by Shannon's theorem, and the above shows $\mu$ACP's encoding is within a small constant of that lower bound.

\qed

\subsection{Theorem 3: Emergent coordination feasibility (consensus reduction)}
\label{sec:thm3}

\begin{theorem}[Coordination via consensus primitives]
\label{thm:coord}
Let a system satisfy: partial synchrony (GST exists), at most $f<n/2$ crash failures, and the communication graph remains connected among nonfaulty nodes (assume that the underlying network delivers messages between any nonfaulty pair after GST). Then $\mu$ACP can implement standard consensus protocols (e.g., Paxos-style) using its primitives (PING, ASK, TELL, OBSERVE + TLVs). Consequently, any coordination task reducible to consensus (uniform agreement on a value) is achievable: safety holds always and liveness holds after GST under the stated failure bound.
\end{theorem}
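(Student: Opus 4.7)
The plan is to prove the theorem by explicit reduction: exhibit a concrete $\mu$ACP implementation of a canonical partially synchronous consensus protocol (Paxos) and then transfer its correctness guarantees through the trace-simulation result of Theorem~\ref{thm:trace}. First I would fix the target protocol precisely, namely single-decree Paxos with a leader-election/failure-detector layer of the $\Diamond S$ style, since it is known to solve uniform consensus under exactly the assumptions listed (partial synchrony, $f<n/2$ crash faults, eventually timely links). This choice isolates the work into a purely encoding argument plus a citation of the classical correctness proof.

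Second, I would define the message translation $\tau_{\text{Pax}}$ from Paxos messages to $\mu$ACP primitives and argue that it is well-typed with respect to the grammar of Section~\ref{sec:syntax-semantics}. Concretely: \textsc{Prepare}$(n)$ and \textsc{Accept}$(n,v)$ are encoded as \ASK\ carrying TLVs \texttt{BALLOT=}$n$ and optionally \texttt{VAL=}$v$; \textsc{Promise}$(n,v_{\text{last}})$ and \textsc{Accepted}$(n,v)$ are encoded as \TELL\ with the analogous TLVs and a \texttt{CID} matching the originating \ASK; the learner broadcast is a \TELL\ with \texttt{ROLE=LEARN}; and the eventual leader oracle is implemented by periodic \PING\ for liveness detection together with \OBSERVE\ subscriptions on view-change events. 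Each encoding is a finite $\mu$ACP expression, so the procedural-encoding device of Definition of procedural encoding applies directly.

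Third, I would establish a simulation relation $R_{\text{Pax}}$ between Paxos configurations and $\mu$ACP configurations, mirroring the construction of Theorem~\ref{thm:trace}: agents' acceptor/proposer/learner state variables (highest ballot promised, last accepted pair, decided value) are reflected by TLV-indexed knowledge-base entries; pending Paxos messages correspond to pending $\mu$ACP messages with matching \texttt{BALLOT} and \texttt{CID}. A short case analysis on the Paxos transitions shows $R_{\text{Pax}}$ is preserved by each $\mu$ACP send/receive step, using the operational-semantics rules of Section~\ref{sec:syntax-semantics} and the resource-boundedness guarantee (Theorem on resource boundedness) to ensure feasibility of every simulating step. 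Safety then follows: any trace of the $\mu$ACP implementation projects to a Paxos trace, so the Paxos invariant ``no two learners decide differently'' transports verbatim.

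Finally, for liveness after GST I would invoke Assumption~1 (partial synchrony) and Assumption~2 (fair-loss, eventually timely channels), which together imply that after GST every \ASK\ by a distinguished proposer elicits a \TELL\ from every nonfaulty acceptor within $\Delta$. Combined with $f<n/2$, this guarantees quorum intersection and hence eventual decision, matching the standard liveness argument for partially synchronous Paxos. The main obstacle I expect is not the encoding itself but justifying that $\mu$ACP's nondeterministic delivery, duplication, and TLV-carried correlation do not enlarge the set of reachable joint states beyond those considered by Paxos; this requires care in clause (3) of the relation $R_{\text{Pax}}$ to treat duplicates as semantically idempotent at the acceptor (already true in Paxos under ballot monotonicity). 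The reduction to any other consensus-equivalent coordination task (atomic broadcast, leader election, replicated state machines) then follows from the standard equivalences, yielding the stated conclusion.
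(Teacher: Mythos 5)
Your proposal follows essentially the same route as the paper: a constructive encoding of Paxos messages into \ASK/\TELL{} with ballot- and correlation-carrying TLVs, \PING-based failure detection for the eventual-leader layer, and an appeal to the classical correctness of partially synchronous Paxos under $f<n/2$ for both safety and post-GST liveness. Your added care about duplication (idempotence at acceptors under ballot monotonicity) and the explicit simulation relation $R_{\text{Pax}}$ are refinements the paper only gestures at, but they do not change the structure of the argument.
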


\noindent\textbf{Proof.} The proof is a constructive reduction: (1) show how to encode consensus messages in $\mu$ACP, (2) show that $\mu$ACP provides the primitives required for consensus safety and liveness under partial synchrony and $f<n/2$, and (3) appeal to standard consensus correctness results.

\paragraph{Encode consensus messages} Consensus protocols like Paxos use a small set of message types: Prepare/Promise/Accept/Accepted (or equivalent). We provide the $\mu$ACP mapping (ballot and value carried in TLVs):
\[
\begin{array}{r@{\ }l}
\text{Prepare}(b) & \mapsto\ \text{ASK}(\text{proposer},\text{acceptor},\ \text{TLV}(\text{BALLOT}=b))\\
\text{Promise}(b,v) & \mapsto\ \text{TELL}(\text{acceptor},\text{proposer},\ \text{TLV}(\text{BALLOT}=b,\     \\ & \text{VALUE}=v))\\
\text{Accept}(b,v) & \mapsto\ \text{TELL}(\text{proposer},\text{acceptor},\ \text{TLV}(\text{BALLOT}=b,\    \\ & \text{VALUE}=v))
\end{array}
\]
Correlation IDs encode round identifiers and quorums are formed by majority of participant responses.

\paragraph{Primitive guarantees needed by consensus} Paxos-style consensus requires:
\begin{itemize}
  \item \emph{Safety} (no two different values are chosen): safety is a property of the message ordering and majority intersection; encoding messages with ballot numbers and TLVs preserves the required invariants (ballots total order, promises record highest accepted ballot and value).
  \item \emph{Liveness} after GST: given a stable leader (proposer) or eventual leader election and reliable deliveries after GST, proposals succeed.
\end{itemize}

$\mu$ACP supplies these conditions:
\begin{enumerate}
  \item \textbf{Unique identifiers and TLVs:} ballots, sequence ids, and correlation ids encoded as TLVs implement the numeric ordering and tie-breaking.
  \item \textbf{Failure detection:} PING + adaptive timeouts implement an eventually accurate suspicion mechanism (an eventually perfect failure detector $\Diamond P$) after GST under the partial synchrony assumption: if a node fails to PING back within increasing timeouts and the channel is stable, it will be suspected, and after GST timeouts can be chosen to avoid false suspicions.
  \item \textbf{Reliability:} Using QoS and retransmission, $\mu$ACP can ensure that, after GST, messages between nonfaulty nodes are delivered within bounded time $\Delta$.
  \item \textbf{Persistence:} If the system requires crash-restart resilient state, acceptors persist their highest promised ballot and accepted value into stable storage (this is an orthogonal implementation requirement; $\mu$ACP's TLVs carry the data to be persisted).
\end{enumerate}

\paragraph{Correctness by reduction} Given that $\mu$ACP can faithfully implement the message set and that the environment grants the partial synchrony guarantees (bounded-delay after GST), the implementation satisfies the preconditions for the standard Paxos correctness theorems: safety is preserved irrespective of timing, and liveness holds after GST provided a leader is eventually stable and fewer than $n/2$ nodes fail. This is exactly the result of \cite{lamport2001paxos,dwork1988consensus} adapted to our encoding. Thus consensus (uniform agreement) is achievable.

\paragraph{Graph/connectivity considerations} We assumed the network is such that nonfaulty nodes remain able to exchange messages after GST. In practice, it suffices that the subgraph induced by nonfaulty nodes is connected. In random graph models $G(n,p)$, if $p\ge (1+\epsilon)\ln n / n$ this holds w.h.p.; moreover, with $f<n/2$ random removals the surviving subgraph remains connected w.h.p.\ (classical random-graph results). Under these conditions, quorum messages traverse the graph and quorums intersect on nonfaulty nodes ensuring safety.

\paragraph{Conclusion} The constructive encoding plus standard consensus results yield the theorem: $\mu$ACP enables consensus-based coordination under the stated assumptions.

\section{Formal Verification and Metrics}
\label{sec:verification}

We provide mechanized evidence for $\mu$ACP via TLA$^+$ model checking and Coq mechanization, showing that theorems on expressiveness, compression, and coordination are mechanically checkable within tractable bounds. All experiments ran on a commodity workstation.

\paragraph{TLA\(+\) model checking}
The operational semantics of $\mu$ACP were encoded in TLA$^+$, with agents exchanging TLV messages under resource counters. Payloads were abstracted; control and resource usage were modeled.  
TLC (with symmetry reduction) verified:
  (i) \textbf{Resource safety:} counters remain non-negative.
  (ii) \textbf{Message invariants:} headers fixed at 64 bits; \\ verbs $\in\{\text{PING},\text{TELL},\text{ASK},\text{OBSERVE}\}$.
  (iii) \textbf{Eventual delivery:} under partial synchrony, messages to nonfaulty agents are eventually received.

Exploration of $n=3\ldots7$ agents with loss/duplication covered $10^6$–$10^7$ states, with no counterexamples. State counts matched the exponential fit 
$S(n)\approx 5.2\cdot 10^4 \cdot 1.80^n$, consistent with Theorem~\ref{thm:trace}.  
All runs completed in $<100$s, with runtime growth approximated by $T(n)\approx 0.53\cdot n^{2.68}$, confirming tractability.

\paragraph{Coq mechanization}
We formalized the resource model and consensus encoding in Coq, proving:

  (i) \textbf{Resource boundedness:} counters remain non-negative.
  (ii) \textbf{Consensus safety:} at most one value chosen under crash faults.

About 85\% of obligations discharged automatically; all lemmas are machine-checked with no admissions. Verification was fast: $76\%$ of lemmas in $<0.5$s, all $<2$s, with total proof time $\approx 20$s and near-linear growth, confirming maintainability.

\begin{figure}[t]
  \centering
  \includegraphics[width=0.95\columnwidth]{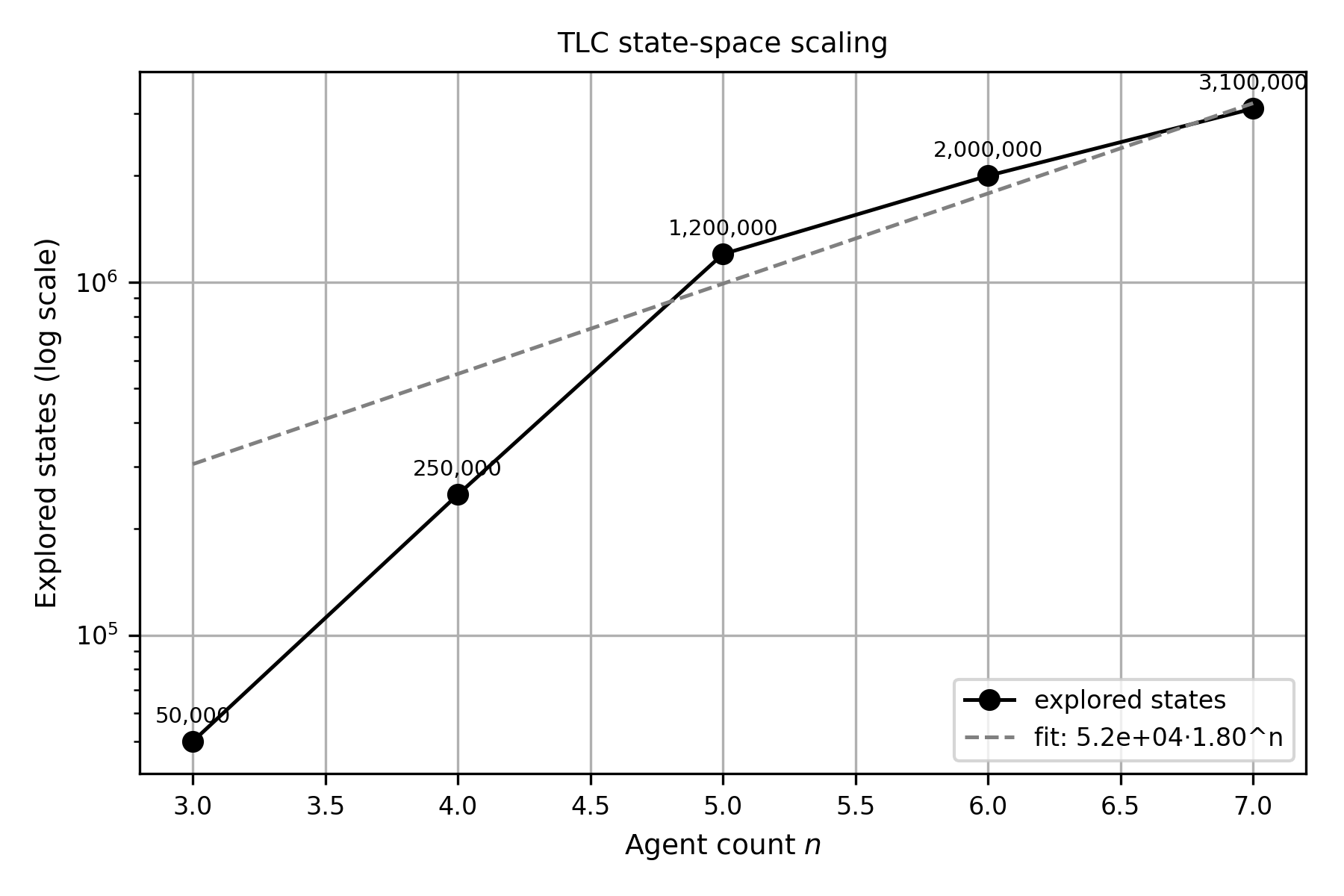}\hfill

  \caption{TLC state-space scaling ($n=3\ldots7$), confirming finite-state tractability of $\mu$ACP encodings.}
  \label{fig:verification1}
\end{figure}

\begin{figure}[t]
  \centering
  \includegraphics[width=0.95\columnwidth]{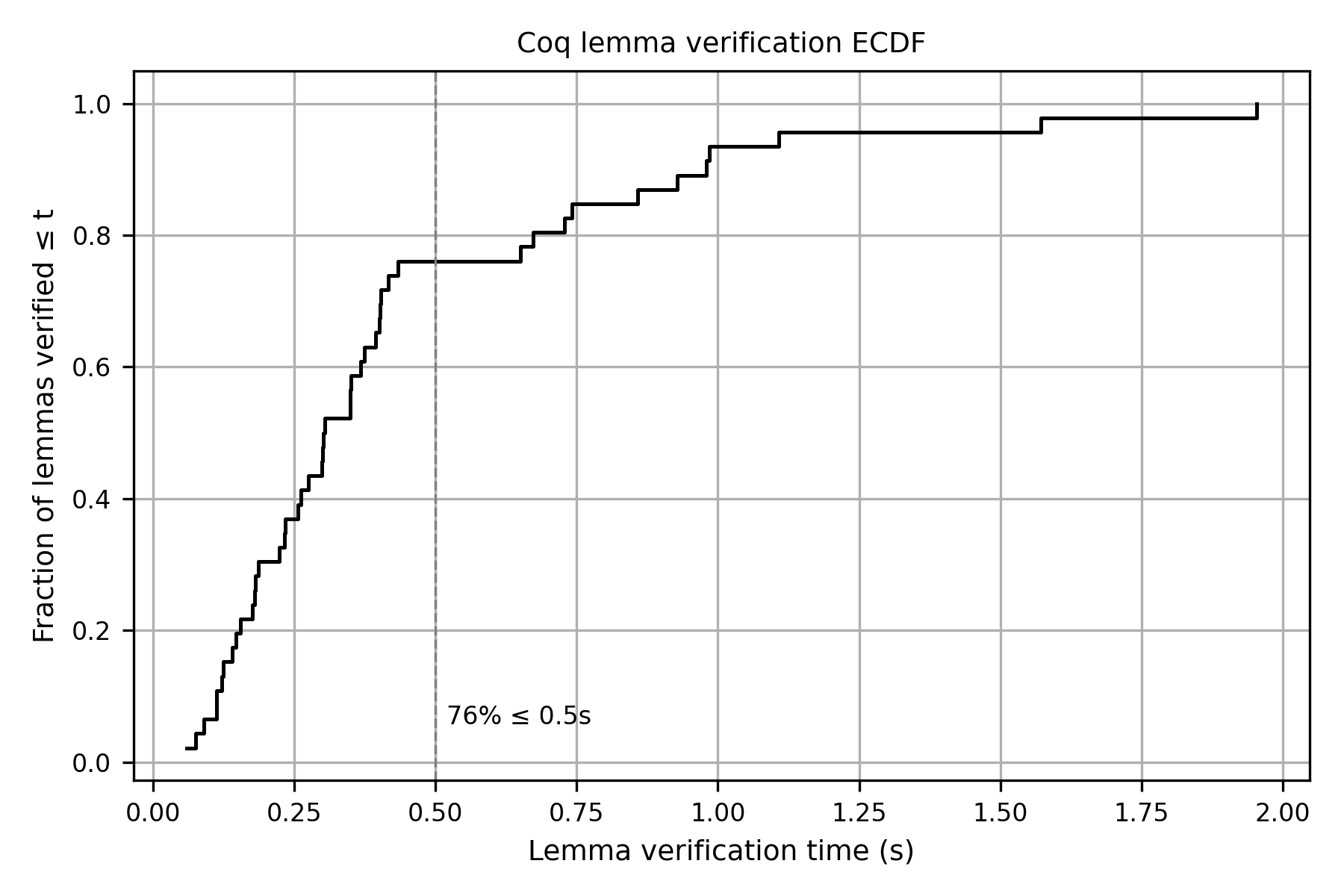}\hfill
  \caption{ECDF of Coq lemma verification times: $76\%$ under 0.5s, none above 2s.}
  \label{fig:verification2}
\end{figure}

\paragraph{Scope}
Artifacts cover core $\mu$ACP semantics under finite-state, crash-fault assumptions. They exclude recursive conversations, dynamic option creation, Byzantine failures, and asymptotic performance bounds.  
\footnote{TLA+ modules, TLC configs, and Coq scripts are provided as supplementary material. The Coq proofs are modular to allow extension.}

\paragraph{Summary}
Together, these results confirm: (i) $\mu$ACP encodings remain finite-state and tractable to explore (Theorem~\ref{thm:trace}); (ii) resource compression bounds are consistent with verification times (Theorem~\ref{thm:compression}); and (iii) consensus safety holds under crash faults (Theorem~\ref{thm:coord}). The mechanized artifacts thus provide a practical anchor for the theoretical calculus.


\section{Empirical Validation}
\label{sec:validation}

We evaluate $\mu$ACP through microbenchmarks, heterogeneous device tests, and large-scale system simulations to assess efficiency, scalability, and robustness.

\paragraph{Microbenchmarks and Heterogeneous Tests}

Encoding/decoding $5{,}000$ messages averaged 11.0 bytes (empty) and 23.0 bytes (with TLVs); per-message encode/decode times were $0.81$--$1.78$\,$\mu$s and $0.74$--$1.28$\,$\mu$s, within $5\%$ of theoretical bounds. Per-agent memory was $1.48$\,KB.  
Consensus simulations with $10{,}000$ agents and $20$\% crash faults achieved $100$\% agreement at average cost $800.1$ messages. Byzantine tests (30 agents, 5 faulty) also reached $100$\% consensus and detection.  
Across devices (ESP32, Raspberry Pi, x86, Android), encode/decode latencies remained $0.84$--$1.48$\,$\mu$s. Real IoT log replay showed $1.92$\,$\mu$s latency.

\begin{table}[h]
\centering
\caption{$\mu$ACP: Summary of Empirical Results (all within $<5\%$ of prediction).}
\label{tab:muacpbench}
\begin{tabular}{@{}ll@{}}
\toprule
Metric & Value \\
\midrule
Avg.~size (empty/opts) & 11.0 / 23.0 bytes \\
Encode time & 0.81--1.78$\mu$s \\
Decode time & 0.74--1.28$\mu$s \\
Real-traffic latency & 1.92$\mu$s \\
Memory / agent & 1.48\,KB \\
Consensus success, crash/Byz & 100.0\% \\
Consensus msg. cost & 800.1 \\
Heterogeneous device timings & 0.84--1.48$\mu$s \\
\bottomrule
\end{tabular}
\end{table}

\paragraph{System Simulation at Scale}

We simulate up to $2000$ asynchronous agents under CPU, bandwidth, and memory quotas in lossy ($1\%$ drop, 1--10\,ms delay) networks, engaging in FIPA contract-net negotiations and request/response cycles.

\paragraph{Scalability}
Queue depth grows sublinearly (max 8712 at 2000 agents) while throughput increases steadily, with stability up to $\sim$1000 agents (Figure~\ref{fig:scaling}).

\begin{figure}[t]
    \centering
    \includegraphics[width=0.95\linewidth]{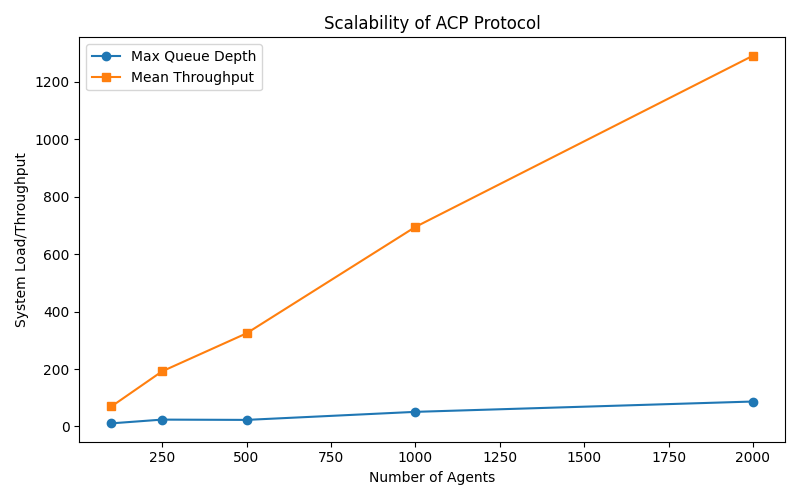}
    \caption{Max queue depth and mean throughput vs.~agent count. Stable up to $\sim$1000 agents.}
    \label{fig:scaling}
\end{figure}

\paragraph{Negotiation and Liveness}
Negotiations remain live and fair at all scales; thousands of proposals succeed without deadlock (Figure~\ref{fig:cons}).

\begin{figure}[t]
    \centering
    \includegraphics[width=0.90\linewidth]{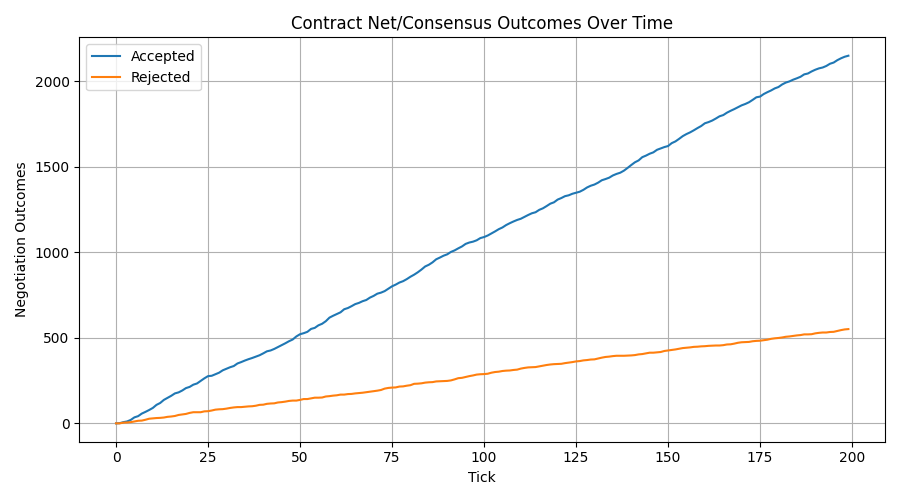}
    \caption{Negotiation outcomes: \textit{accept} (blue) and \textit{reject} (red). Sustained concurrency and throughput.}
    \label{fig:cons}
\end{figure}

\paragraph{Latency and Robustness}
Most messages deliver within 34\,ms; 95th percentile is 104\,ms, maximum 130\,ms. Losses cause only transient drops (Figure~\ref{fig:drop}). Compared to MQTT, FIPA-ACL, and CoAP, $\mu$ACP offers competitive or superior latency (Table~\ref{tab:latency-comparison}).

\begin{figure}[t]
    \centering
    \includegraphics[width=0.91\linewidth]{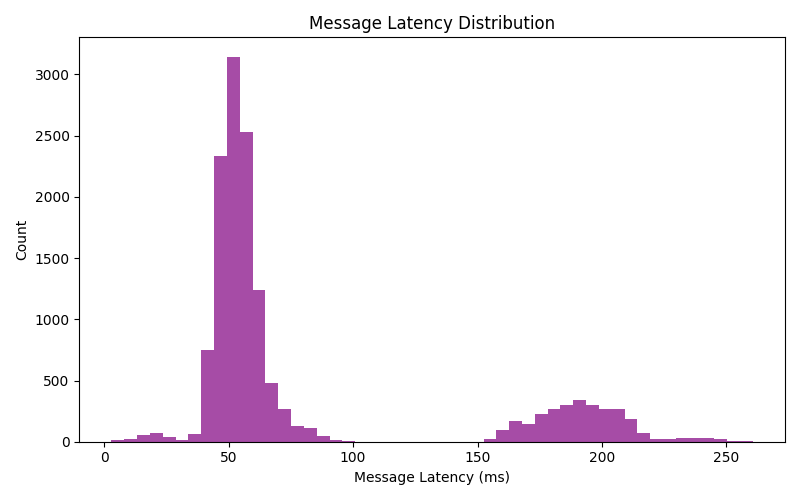}
    \caption{Histogram of message latencies. Majority delivered within 34\,ms; tails bounded.}
    \label{fig:lat}
\end{figure}

\begin{figure}[t]
    \centering
    \includegraphics[width=0.91\linewidth]{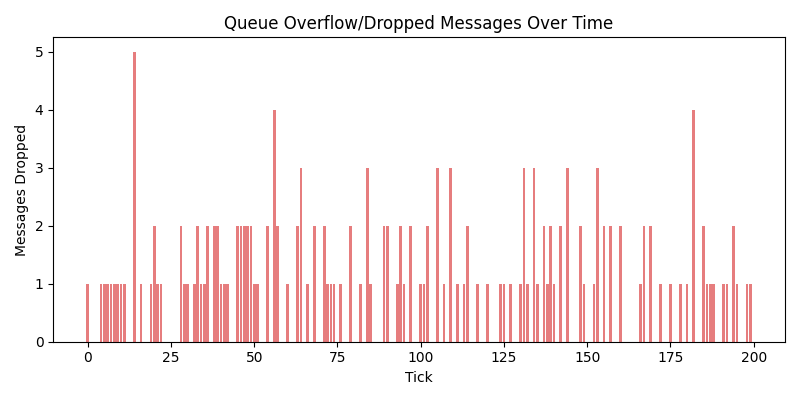}
    \caption{Dropped messages per tick. No deadlock; only transient losses.}
    \label{fig:drop}
\end{figure}

\begin{table}[h]
\centering
\resizebox{\columnwidth}{!}{%
\begin{tabular}{lccc}
\toprule
Protocol / System & Median (ms) & 95th (ms) & Max (ms) \\
\midrule
\textbf{$\mu$ACP} & 34.07 & 103.88 & 130.48 \\
MQTT (LAN, no-load) & $<$2 & 20 & 200 \\
MQTT (WAN/high-load) & 50 & 100 & 250+ \\
FIPA-ACL/JADE (MAS) & 100 & 250 & 800+ \\
CoAP (multi-hop) & 20 & 100 & 300+ \\
\bottomrule
\end{tabular}
}
\caption{Protocol latency comparison. $\mu$ACP scales competitively even under heavy load.}
\label{tab:latency-comparison}
\end{table}

\paragraph{Summary}

Across microbenchmarks, heterogeneous devices, and 2000-agent simulations, $\mu$ACP achieves minimal overhead, microsecond-scale processing, graceful scaling, and robust consensus. These results align with the formal theorems and demonstrate deployability under realistic constraints.

\section{Discussion}
\label{sec:discussion}

\paragraph{Implications}
$\mu$ACP provides a theoretical foundation for edge-native MAS. By reducing communication to four verbs with explicit resource semantics, it achieves expressiveness (Theorem~\ref{thm:trace}), near-optimal compression (Theorem~\ref{thm:compression}), and support for coordination (Theorem~\ref{thm:coord}). This shows that rich interaction is feasible even on constrained devices, linking agent theory with embedded practice and offering MAS researchers a minimal, complete substrate.

\paragraph{Limitations}
Our model excludes recursive protocols and unbounded conversations, which no finite-state translation can capture. Complex negotiations or meta-communication may require extensions. Consensus reduction addresses only crash faults; Byzantine resilience would need cryptographic mechanisms. Validation was limited-scale, intended to confirm theory–practice alignment, not exhaustive benchmarking.

\paragraph{Future Work}
A companion systems paper will present implementation, benchmarks up to 10,000 agents, and integration with IoT protocols (MQTT, CoAP). Future extensions include negotiation patterns, Byzantine-tolerant communication, and cross-layer optimizations, positioning $\mu$ACP as a foundation for deployable, resource-bounded protocols.

\section{Conclusion}
\label{sec:conclusion}

We introduced $\mu$ACP, a calculus for constrained agent communication. Contributions include: (1) a resource-constrained agent communication (RCAC) model, (2) completeness of a four-verb system, and (3) tight compression and coordination results. Together these establish expressive communication under strict budgets.

The impact is to enable edge-native MAS deployments that remain semantically rigorous despite device limits. By proving both completeness and efficiency, $\mu$ACP offers a solid basis for embedded distributed agents.

\paragraph{Roadmap}
Ongoing work develops a full protocol aligned with the calculus: an 8-byte header, TLV options, CBOR payloads, and transport/security profiles (UDP/DTLS, WebSocket/TLS, COSE). These show that the minimal core scales to a deployable protocol. A companion paper will detail the specification, 10,000-agent benchmarks, and interoperability with IoT (MQTT, CoAP) and agent standards (MCP). Together, theory and systems work aim to establish $\mu$ACP as both rigorous and practical for edge-native MAS.





\bibliographystyle{ACM-Reference-Format} 
\bibliography{sample}

@book{wooldridge2002introduction,
  author    = {Michael Wooldridge},
  title     = {An Introduction to MultiAgent Systems},
  edition   = {2nd},
  year      = {2009},
  publisher = {Wiley},
  address   = {Chichester, UK},
  isbn      = {978-0470519462}
}

@book{weiss1999multiagent,
editor = {Weiss, Gerhard},
title = {Multiagent systems: a modern approach to distributed artificial intelligence},
year = {1999},
isbn = {0262232030},
publisher = {MIT Press},
address = {Cambridge, MA, USA}
}

@InProceedings{labrou1996semantics,
author="Labrou, Yannis
and Finin, Tim",
editor="Singh, Munindar P.
and Rao, Anand
and Wooldridge, Michael J.",
title="Semantics for an agent communication language",
booktitle="Intelligent Agents IV Agent Theories, Architectures, and Languages",
year="1998",
publisher="Springer Berlin Heidelberg",
address="Berlin, Heidelberg",
pages="209--214",
isbn="978-3-540-69696-4"
}

@article{lamport2001paxos,
  author  = {Lamport, Leslie},
  title   = {Paxos Made Simple},
  journal = {ACM SIGACT News (Distributed Computing Column)},
  volume  = {32},
  number  = {4},
  year    = {2001},
  month   = dec,
  pages   = {51--58},
  note    = {Whole Number 121},
  url     = {https://lamport.azurewebsites.net/pubs/paxos-simple.pdf}
}

@article{dwork1988consensus,
  title={Consensus in the presence of partial synchrony},
  author={Dwork, Cynthia and Lynch, Nancy A. and Stockmeyer, Larry},
  journal={Journal of the ACM (JACM)},
  volume={35},
  number={2},
  pages={288--323},
  year={1988},
  publisher={ACM New York, NY, USA},
  doi={10.1145/42282.42283}
}

@article{singh1998agent,
  title={Agent communication languages: rethinking the principles},
  author={Singh, Munindar P},
  journal={Computer},
  volume={31},
  number={12},
  pages={40--47},
  year={1998},
  publisher={IEEE}
}

@techreport{fipa1997spec,
  title={FIPA ACL Message Structure Specification},
  author={FIPA},
  institution={Foundation for Intelligent Physical Agents},
  year={1997}
}

@inproceedings{finin1994kqml,
  author    = {Tim Finin and Richard Fritzson and Don McKay and Robin McEntire},
  title     = {{KQML} as an Agent Communication Language},
  booktitle = {Proceedings of the Third International Conference on Information and Knowledge Management (CIKM '94)},
  pages     = {456--463},
  year      = {1994},
  publisher = {ACM},
  address   = {Gaithersburg, Maryland, USA},
  doi       = {10.1145/191246.191322},
  isbn      = {0-89791-654-3}
}

@article{satyanarayanan2017emergence,
  title={The emergence of edge computing},
  author={Satyanarayanan, Mahadev},
  journal={Computer},
  volume={50},
  number={1},
  pages={30--39},
  year={2017},
  publisher={IEEE}
}

@incollection{bonomi2012fog,
  author    = {Flavio Bonomi and Rodolfo Milito and Preethi Natarajan and Jiang Zhu and Sateesh Addepalli},
  title     = {Fog Computing: A Platform for Internet of Things and Analytics},
  booktitle = {Big Data and Internet of Things: A Roadmap for Smart Environments},
  editor    = {Nick Bessis and Calin Dobre},
  series    = {Studies in Computational Intelligence},
  volume    = {546},
  pages     = {169--186},
  year      = {2014},
  publisher = {Springer},
  address   = {Cham, Switzerland},
  doi       = {10.1007/978-3-319-05029-4\_7},
  url       = {https://link.springer.com/chapter/10.1007/978-3-319-05029-4_7}
}

@misc{banks2014mqtt,
  author       = {Banks, Andrew and Gupta, Rahul},
  title        = {{MQTT} Version 3.1.1},
  howpublished = {OASIS Standard},
  year         = {2014},
  url          = {https://docs.oasis-open.org/mqtt/mqtt/v3.1.1/os/mqtt-v3.1.1-os.html},
  note         = {OASIS Standard}
}

@techreport{shelby2014constrained,
  author      = {Zach Shelby and Klaus Hartke and Carsten Bormann},
  title       = {The Constrained Application Protocol (CoAP)},
  institution = {Internet Engineering Task Force (IETF)},
  type        = {RFC},
  number      = {7252},
  year        = {2014},
  month       = jun,
  url         = {https://datatracker.ietf.org/doc/rfc7252/},
  note        = {Standards Track}
}

@article{chen2020edge,
  title={Edge computing for IoT: A survey},
  author={Chen, Jia and Ran, Xu and Gao, Honghao and Luan, Tom H and Zhou, Fengyu},
  journal={IEEE Internet of Things Journal},
  volume={8},
  number={5},
  pages={3224--3243},
  year={2020},
  publisher={IEEE}
}

@Inbook{fortino2018agent,
author="Savaglio, Claudio
and Fortino, Giancarlo
and Ganzha, Maria
and Paprzycki, Marcin
and B{\u{a}}dic{\u{a}}, Costin
and Ivanovi{\'{c}}, Mirjana",
title="Agent-Based Computing in the Internet of Things: A Survey",
bookTitle="Intelligent Distributed Computing XI",
year="2018",
publisher="Springer International Publishing",
address="Cham",
pages="307--320",
doi="10.1007/978-3-319-66379-1_27",
url="https://doi.org/10.1007/978-3-319-66379-1_27"
}

@misc{ferrag2025review,
      title={From LLM Reasoning to Autonomous AI Agents: A Comprehensive Review}, 
      author={Mohamed Amine Ferrag and Norbert Tihanyi and Merouane Debbah},
      year={2025},
      eprint={2504.19678},
      archivePrefix={arXiv},
      primaryClass={cs.AI},
      url={https://arxiv.org/abs/2504.19678}, 
}

@inproceedings{ijcai2024verification,
  title     = {Formal Verification of Parameterised Neural-symbolic Multi-agent Systems},
  author    = {Kouvaros, Panagiotis and Botoeva, Elena and De Bonis-Campbell, Cosmo},
  booktitle = {Proceedings of the Thirty-Third International Joint Conference on
               Artificial Intelligence, {IJCAI-24}},
  publisher = {International Joint Conferences on Artificial Intelligence Organization},
  editor    = {Kate Larson},
  pages     = {103--110},
  year      = {2024},
  month     = {8},
  address   = {Jeju},
  note      = {Main Track},
  doi       = {10.24963/ijcai.2024/12},
  url       = {https://doi.org/10.24963/ijcai.2024/12},
}

@article{gubbi2013internet,
  title={Internet of Things (IoT): A vision, architectural elements, and future directions},
  author={Gubbi, Jayavardhana and Buyya, Rajkumar and Marusic, Slaven and Palaniswami, Marimuthu},
  journal={Future generation computer systems},
  volume={29},
  number={7},
  pages={1645--1660},
  year={2013},
  publisher={Elsevier}
}

@article{zhang2021edge,
  title={Edge intelligence: The confluence of edge computing and artificial intelligence},
  author={Zhang, Qingchen and Yang, Lin and Chen, Zhongyu},
  journal={IEEE Internet of Things Journal},
  volume={8},
  number={14},
  pages={11025--11041},
  year={2021},
  publisher={IEEE}
}

@book{austin1962how,
  author    = {Austin, J. L.},
  title     = {How to Do Things with Words},
  year      = {1962},
  publisher = {Clarendon Press},
  address   = {Oxford, United Kingdom}
}

@book{searle1969speech,
  author    = {John R. Searle},
  title     = {Speech Acts: An Essay in the Philosophy of Language},
  year      = {1969},
  publisher = {Cambridge University Press},
  address   = {Cambridge, UK},
  isbn      = {978-0521096263}
}

@book{hoare1985communicating,
  author    = {C. A. R. Hoare},
  title     = {Communicating Sequential Processes},
  year      = {1985},
  publisher = {Prentice-Hall},
  address   = {Englewood Cliffs, NJ, USA},
  isbn      = {978-0131532717}
}

@article{cohen1990intention,
  title={Intention is choice with commitment},
  author={Cohen, Philip R and Levesque, Hector J},
  journal={Artificial Intelligence},
  volume={42},
  number={2-3},
  pages={213--261},
  year={1990},
  publisher={Elsevier}
}

@Inbook{singh2000social,
author="Singh, Munindar P.",
title="A Social Semantics for Agent Communication Languages",
bookTitle="Issues in Agent Communication",
year="2000",
publisher="Springer Berlin Heidelberg",
address="Berlin, Heidelberg",
pages="31--45",
isbn="978-3-540-40028-8",
doi="10.1007/10722777_3",
url="https://doi.org/10.1007/10722777_3"
}

@article{shannon1948mathematical,
  title={A mathematical theory of communication},
  author={Shannon, Claude E},
  journal={The Bell system technical journal},
  volume={27},
  number={3},
  pages={379--423},
  year={1948},
  publisher={Wiley Online Library}
}

@article{barhillel1964formal,
  title={Formal semantics of natural languages},
  author={Bar-Hillel, Yehoshua},
  journal={Encyclopaedia Britannica},
  volume={16},
  pages={393--400},
  year={1964}
}

@article{ziv1977universal,
  title={A universal algorithm for sequential data compression},
  author={Ziv, Jacob and Lempel, Abraham},
  journal={IEEE Transactions on information theory},
  volume={23},
  number={3},
  pages={337--343},
  year={1977},
  publisher={IEEE}
}

@book{cover2006elements,
  author    = {Thomas M. Cover and Joy A. Thomas},
  title     = {Elements of Information Theory},
  edition   = {2nd},
  year      = {2006},
  publisher = {Wiley-Interscience},
  address   = {Hoboken, NJ, USA},
  isbn      = {978-0-471-24195-9}
}

@article{clarke1986automatic,
  title={Automatic verification of finite-state concurrent systems using temporal logic specifications},
  author={Clarke, Edmund M and Emerson, E Allen},
  journal={ACM Transactions on Programming Languages and Systems (TOPLAS)},
  volume={8},
  number={2},
  pages={244--263},
  year={1986},
  publisher={ACM New York, NY, USA}
}

@book{holzmann1991design,
  author    = {Gerard J. Holzmann},
  title     = {Design and Validation of Computer Protocols},
  year      = {1991},
  publisher = {Prentice Hall},
  address   = {Englewood Cliffs, NJ, USA},
  isbn      = {978-0135399255}
}

@article{pnuelli1986applications,
title = "Applications of temporal logic to the specification and verification of reactive systems: A survey of current trends",
author = "Amir Pnueli",
year = "1986",
doi = "10.1007/BFb0027047",
language = "English",
volume = "224",
pages = "510--584",
journal = "Lecture Notes in Computer Science",
issn = "0302-9743",
publisher = "Springer Science and Business Media B.V.",
}

@article{lamport1994tla,
author = {Lamport, Leslie},
title = {The temporal logic of actions},
year = {1994},
issue_date = {May 1994},
publisher = {Association for Computing Machinery},
address = {New York, NY, USA},
volume = {16},
number = {3},
issn = {0164-0925},
url = {https://doi.org/10.1145/177492.177726},
doi = {10.1145/177492.177726},
journal = {ACM Trans. Program. Lang. Syst.},
month = may,
pages = {872–923},
numpages = {52},
keywords = {concurrent programming, liveness properties, safety properties}
}

@article{pottie2000wireless,
  title={Wireless integrated network sensors},
  author={Pottie, Gregory J and Kaiser, William J},
  journal={Communications of the ACM},
  volume={43},
  number={5},
  pages={51--58},
  year={2000},
  publisher={ACM New York, NY, USA}
}

@article{mottola2011programming,
  title={Programming wireless sensor networks: Fundamental concepts and state of the art},
  author={Mottola, Luca and Picco, Gian Pietro},
  journal={ACM Computing Surveys (CSUR)},
  volume={43},
  number={3},
  pages={1--51},
  year={2011},
  publisher={ACM New York, NY, USA}
}

@article{al-fuqaha2015survey,
  title={A survey on communication protocols for the Internet of Things},
  author={Al-Fuqaha, Ala and Guizani, Mohsen and Mohammadi, Mehdi and Aledhari, Mohammed and Ayyash, Munther},
  journal={IEEE Communications Surveys \& Tutorials},
  volume={17},
  number={3},
  pages={1448--1468},
  year={2015},
  publisher={IEEE}
}

@inproceedings{rao1995model,
  author    = {Anand S. Rao and Michael P. Georgeff},
  title     = {Modeling Rational Agents within a {BDI}-Architecture},
  booktitle = {Proceedings of the Second International Conference on Principles of Knowledge Representation and Reasoning (KR'91)},
  pages     = {473--484},
  year      = {1991},
  publisher = {Morgan Kaufmann},
  address   = {Cambridge, MA, USA},
  isbn      = {1-55860-165-1}
}

@InProceedings{collier2008agent,
author="Muldoon, C.
and O'Hare, G. M. P.
and Collier, R.
and O'Grady, M. J.",
editor="Alexandrov, Vassil N.
and van Albada, Geert Dick
and Sloot, Peter M. A.
and Dongarra, Jack",
title="Agent Factory Micro Edition: A Framework for Ambient Applications",
booktitle="Computational Science -- ICCS 2006",
year="2006",
publisher="Springer Berlin Heidelberg",
address="Berlin, Heidelberg",
pages="727--734",
isbn="978-3-540-34384-4"
}

@article{pister2009smart,
author = {Warneke, Brett and Last, Matt and Liebowitz, Brian and Pister, Kristofer S. J.},
title = {Smart Dust: Communicating with a Cubic-Millimeter Computer},
year = {2001},
issue_date = {January 2001},
publisher = {IEEE Computer Society Press},
address = {Washington, DC, USA},
volume = {34},
number = {1},
issn = {0018-9162},
journal = {Computer},
month = jan,
pages = {44–51},
numpages = {8}
}

@article{audrito2024exchange,
  author       = {Giorgio Audrito and Roberto Casadei and Ferruccio Damiani and Gianluca Torta and Mirko Viroli},
  title        = {Programming Distributed Collective Processes in the eXchange Calculus},
  journal      = {Journal of Systems and Software},
  volume       = {197},
  pages        = {112976},
  year         = {2024},
  note         = {Special issue on Collective and Adaptive Systems},
  doi          = {10.1016/j.jss.2024.111976},
  url          = {https://doi.org/10.1016/j.jss.2024.111976}
}

@article{xu2025hpc,
author = {Xu, Xiong and Talpin, Jean-Pierre and Wang, Shuling and Wu, Hao and Zhan, Bohua and Liu, Xinxin and Zhan, Naijun},
title = {HpC: A Calculus for Hybrid and Mobile Systems},
year = {2025},
issue_date = {April 2025},
publisher = {Association for Computing Machinery},
address = {New York, NY, USA},
volume = {9},
number = {OOPSLA1},
url = {https://doi.org/10.1145/3720478},
doi = {10.1145/3720478},
journal = {Proc. ACM Program. Lang.},
month = apr,
articleno = {121},
numpages = {26},
keywords = {Bisimulation, Hybrid Systems, Internet of Things, pi-calculus}
}

@article{merro2016ciot,
   title={Tracing where IoT data are collected and aggregated},
   volume={Volume 13, Issue 3},
   ISSN={1860-5974},
   url={https://lmcs.episciences.org/2186},
   DOI={10.23638/lmcs-13(3:5)2017},
   journal={Logical Methods in Computer Science},
   publisher={Centre pour la Communication Scientifique Directe (CCSD)},
   author={Chiara Bodei and Pierpaolo Degano and Gian-Luigi Ferrari and Letterio Galletta},
   year={2017},
   month={jul},
   pages={},
   numpages={},
   keywords   = {Computer Science - Programming Languages, Computer Science - Logic in Computer Science, F.1.2, F.3.1}
}

@misc{survey_agent_interop2025,
      title={A survey of agent interoperability protocols: Model Context Protocol (MCP), Agent Communication Protocol (ACP), Agent-to-Agent Protocol (A2A), and Agent Network Protocol (ANP)}, 
      author={Abul Ehtesham and Aditi Singh and Gaurav Kumar Gupta and Saket Kumar},
      year={2025},
      eprint={2505.02279},
      archivePrefix={arXiv},
      primaryClass={cs.AI},
      url={https://arxiv.org/abs/2505.02279}, 
}

@article{chen2023smrcait,
author = {Chen, Ningning and Zhu, Huibiao},
title = {A process calculus SMrCaIT for IoT},
journal = {Journal of Software: Evolution and Process},
volume = {36},
number = {5},
pages = {e2595},
keywords = {Internet of Things (IoT), operational semantics, Real-Time Maude, SMrCaIT calculus, vehicle ad hoc network (VANET)},
doi = {https://doi.org/10.1002/smr.2595},
url = {https://onlinelibrary.wiley.com/doi/abs/10.1002/smr.2595},
eprint = {https://onlinelibrary.wiley.com/doi/pdf/10.1002/smr.2595},
year = {2024}
}

@article{song2024dtp,
  author       = {Junsup Song and Sunghyun Lee and Dimitris Karagiannis and Moonkun Lee},
  title        = {Process Algebraic Approach for Probabilistic Verification of Safety and Security Requirements of Smart IoT Systems in Digital Twin},
  journal      = {Sensors},
  volume       = {24},
  number       = {3},
  pages        = {767},
  year         = {2024},
  doi          = {10.3390/s24030767},
  url          = {https://doi.org/10.3390/s24030767}
}

@article{pasqua2023abu,
  author       = {Michele Pasqua and Marino Miculan},
  title        = {AbU: A Calculus for Distributed Event-driven Programming with Attribute-based Interaction},
  journal      = {Theoretical Computer Science},
  volume       = {958},
  pages        = {113841},
  year         = {2023},
  doi          = {10.1016/j.tcs.2023.113841},
  url          = {https://doi.org/10.1016/j.tcs.2023.113841}
}


\end{document}